\documentclass[10pt, a4paper]{article} 
\usepackage[square,numbers,sort&compress]{natbib}
\usepackage[shortlabels]{enumitem}
\usepackage{setspace}
\usepackage{subfloat}
\usepackage{mathtools}
\usepackage{amsmath,amsfonts,amssymb,amsthm,mathrsfs, bm}
\usepackage[utf8]{inputenc}
\usepackage[swedish, english]{babel}
\usepackage{csquotes}
\usepackage{subfig}
\usepackage{graphicx}
\usepackage{supertabular}                                             
\usepackage[table]{xcolor} 
\usepackage{textgreek} 
\usepackage{graphicx}
\usepackage{epstopdf}
\usepackage{microtype}
\usepackage{multirow}
\usepackage{hyperref}
\usepackage{xcolor}
\usepackage{authblk}
\usepackage{datetime2}	
\usepackage[margin=1in]{geometry} % Change of margins
\usepackage{setspace}	% for double spacing
\theoremstyle{plain}% Theorem-like structures provided by amsthm.sty
\newtheorem{theorem}{Theorem}[section]
\newtheorem{lemma}{Lemma}[section]
\newtheorem{corollary}{Corollary}[section]

\usepackage{amsmath,amsfonts,amssymb,amsthm,mathrsfs, bm}
\usepackage{stmaryrd}

\usepackage{graphicx}
\usepackage{tikz}
\usepackage[utf8]{inputenc}
\usepackage{pgfplots} 
\usepackage{pgfgantt}
\usepackage{pdflscape}
\pgfplotsset{compat=newest} 
\pgfplotsset{plot coordinates/math parser=false}

\usepackage{endnotes}
\usepackage{commath}
\usepackage{gensymb}
\usepackage{mathtools}
\usepackage{tikz} 
\usetikzlibrary{calc,quotes,angles}
\usepackage{tkz-euclide}
\usetikzlibrary{automata,positioning} 

\usepackage{xcolor}
\usepackage{tikz,tkz-euclide,siunitx}
\usepackage{etoolbox} % for \patchcmd

\usetikzlibrary{quotes,arrows.meta,angles,calc,shadings,positioning,babel}

\newtheorem{remark}{Remark}[section]

\usepackage{endnotes}
\usepackage{commath}
\usepackage{gensymb}
\usepackage{tcolorbox}
\makeatletter
\patchcmd{\tkz@DrawLine}{\begingroup}{\begingroup\makeatletter}{}{}
\makeatother

\allowdisplaybreaks

\DeclareMathOperator{\esup}{ess\, sup}

\makeatletter
\newcommand\makebig[2]{%
  \@xp\newcommand\@xp*\csname#1\endcsname{\bBigg@{#2}}%
  \@xp\newcommand\@xp*\csname#1l\endcsname{\@xp\mathopen\csname#1\endcsname}%
  \@xp\newcommand\@xp*\csname#1r\endcsname{\@xp\mathclose\csname#1\endcsname}%
}
\makeatother

\providecommand*{\ped}[1]{%
\ensuremath{_\textnormal{#1}}}

\providecommand*{\eu}%
{\ensuremath{\mathrm{e}}}
\providecommand*{\im}%
{\ensuremath{\mathrm{i}}}
\providecommand*{\GammaF}%
{\ensuremath{\mathrm{\Gamma}}}
\providecommand*{\BetaF}%
{\ensuremath{\mathrm{\Beta}}}

\makebig{biggg} {3.0}
\makebig{Biggg} {3.5}
\makebig{bigggg}{4.0}
\makebig{Bigggg}{4.5}
\makebig{biggggg}{5.0}
\makebig{Biggggg}{5.5}
\usepackage{longtable}
\usepackage{xcolor}
\DeclareMathSymbol{\Gamma}{\mathalpha}{letters}{"00}
\DeclareMathSymbol{\Delta}{\mathalpha}{letters}{"01}
\DeclareMathSymbol{\Theta}{\mathalpha}{letters}{"02}
\DeclareMathSymbol{\Lambda}{\mathalpha}{letters}{"03}
\DeclareMathSymbol{\Xi}{\mathalpha}{letters}{"04}
\DeclareMathSymbol{\Pi}{\mathalpha}{letters}{"05}
\DeclareMathSymbol{\Sigma}{\mathalpha}{letters}{"06}
\DeclareMathSymbol{\Upsilon}{\mathalpha}{letters}{"07}
\DeclareMathSymbol{\Phi}{\mathalpha}{letters}{"08}
\DeclareMathSymbol{\Psi}{\mathalpha}{letters}{"09}
\DeclareMathSymbol{\Omega}{\mathalpha}{letters}{"0A}

\definecolor{matblue}{rgb}{0.0000,0.4470,0.7410}
\definecolor{matred}{rgb}{0.8500,0.3250,0.0980}
\definecolor{matyellow}{rgb}{0.9290,0.6940,0.1250}
\definecolor{matpurple}{rgb}{0.4940,0.1840,0.5560}
\definecolor{matgreen}{rgb}{0.4660,0.6740,0.1880}
\definecolor{matcyan}{rgb}{0.3010,0.7450,0.9330}
\definecolor{matmaroon}{rgb}{0.6350,0.0780,0.1840}

\usepackage{hyperref}
\hypersetup{
    colorlinks = true,
    linkbordercolor = {white},
            linkcolor=blue,
            bookmarks=true,
            filecolor=blue,
            urlcolor=blue,
            citecolor=blue
}

\newtcolorbox[auto counter]{modelbox}[2][]{%
  colback=white, colframe=black,
  fonttitle=\bfseries,
  title=Model~\thetcbcounter: #2,
  label=#1
}

\begin{document}

\title{Bare and stretched string tyre models\\ with distributed FrBD dynamics}
\date{}
\author[a,b]{Luigi Romano\thanks{Corresponding author. Email: luigi.romano@liu.se.}}
%\author[c]{Igo Besselink}
%\author[a]{Jan Aslund}
%\author[a]{Erik Frisk}
\affil[a]{\footnotesize{Division of Vehicular Systems, Department of Electrical Engineering, Linköping University, SE-581 83 Linköping, Sweden}}
%\affil[b]{\footnotesize{Department of Engineering Cybernetics, Norwegian University of Science and Technology, O. S. Bragstads plass 2, NO-7034, Trondheim, Norway}}
%\affil[b]{\footnotesize{Department of Mechanical Engineering, Eindhoven University of Technology, Groene Loper 1, 5612 AZ Eindhoven, the Netherlands}}
\affil[b]{\footnotesize{Control Systems Technology Group, Department of Mechanical Engineering, Eindhoven University of Technology, Groene Loper 1, 5612 AZ Eindhoven, the Netherlands}}
%\affil[c]{\footnotesize{Dynamics and Control Group, Department of Mechanical Engineering, Eindhoven University of Technology, Groene Loper 1, 5612 AZ Eindhoven, the Netherlands}}

\maketitle

\begin{abstract}
This paper presents a novel class of string tyre models with FrBD friction dynamics. By modelling the distributed carcass and tread deformations with string-like equations, the resulting formulation leads to a system of semilinear parabolic \emph{partial differential equations} (PDEs) that describe the evolution of the tyre states without explicitly distinguishing between stick and slip regimes. Rigorous stability and passivity analyses are also conducted using a Lyapunov-based approach, establishing boundedness of the distributed states and energy dissipation during rolling contact. The proposed Lyapunov function admits a clear physical interpretation as the total elastic energy stored in the tyre, enabling a direct link between mechanical energy storage and frictional dissipation due to slip losses. The steady-state and transient behaviours of the model are investigated both numerically and experimentally, revealing that the new formulation can satisfactorily reproduce nonlinear relaxation phenomena excited by step slip inputs.
The resulting model provides a physically interpretable, mathematically well-posed, and computationally efficient basis for advanced vehicle dynamics simulations and control-oriented applications.
\end{abstract}
\section*{Keywords}
Tyre dynamics; string tyre models; rolling contact; friction modelling; distributed parameter systems; semilinear systems

%In their original form, however, such models are essentially quasi-static and cannot capture relaxation phenomena, memory effects, or the transient response to rapidly varying slip and spin inputs.

\section{Introduction}\label{intro}
The process of tyre forces and moment generation govern essential vehicle dynamics such as handling, braking, traction, and stability, and its accurate representation is critical for both simulation-based design and advanced control applications \cite{Rajamani,Nielsen,Savaresi,LuGreControl2,deCastro2,Poussot,Wang,Solyom,LuGreControl3,deCastro,Fors1,Fors2,Fors3,Fors4}. Whereas steady-state tyre characteristcs are relatively well understood and may be reproduced with high fidelity using empirical and semi-physical models, a reliable description of transient tyre behaviour remains a persistent challenge \cite{Pacejka2,Guiggiani}.

Early analytical models, including brush \cite{Pacejka2,Guiggiani,LibroMio} and string-like formulations \cite{Schlippe,Segel,Higuchi1,Higuchi2,Pauwelussen}, have provided a physically interpretable framework for relating the local deformation of the tyre tread and carcass to the resulting forces and moments. These models, which are rooted in rolling contact mechanics and classical friction laws \cite{Johnson}, have played a central role in the development of modern tyre theory and continue to underpin widely used semi-empirical formulations \cite{Rill1,PAC}. In these descriptions, internal state variables typically represent bristle deflections or distributed contact deformations, leading to systems of hyperbolic PDEs describing the transport of shear stresses along the contact patch \cite{USB,Meccanica2}. Combined with local Coulomb-Amontons friction laws, similar formulations naturally predict nonlinear relaxation phenomena, and can reproduce experimentally observed delays in force build-up. Nevertheless, their hyperbolic structure imposes intrinsic constraints: boundary conditions cannot be imposed simultaneously at both ends of the contact patch, spatial discontinuities may arise at the leading or trailing edge  \cite{Pacejka2,Higuchi1,Higuchi2}, and oscillatory responses are often observed under time-varying slip excitation. Additionally, Coulomb-like friction models can predict the existence of multiple stick and slip zones within the tyre's contact patch \cite{Pacejka2,Guiggiani,LibroMio,Higuchi1,Higuchi2,USB,Meccanica2}, posing enormous difficulties to the rigorous mathematical analyses of these descriptions. Such features also complicate both numerical implementation, particularly when these models are embedded within vehicle-level simulations.

More recently, the development of dynamic friction models of LuGre \cite{Sorine,TsiotrasConf,Tsiotras1,Tsiotras3,Deur0,Deur1,Deur2,Gauterin,Gauterin2} and FrBD type \cite{FrBD,2DFrBD,2DFrBD3} has advanced the modelling of transient tyre-road interaction at the local level. These formulations introduce internal friction states that capture pre-sliding displacement, hysteresis, and velocity-dependent friction effects, and have been successfully applied to tyre dynamics under combined slip and spin conditions. When coupled with distributed contact descriptions, FrBD-based tyre models provide a powerful framework for describing transient rolling contact phenomena without explicitly differentiating between stick and slip behaviours, thus overcoming the intrinsic disadvantages Coulomb friction. However, existing formulations typically retain a transport-dominated structure \cite{FrBD,2DFrBD,2DFrBD3}, and the influence of tyre structural compliance on the mathematical nature of the governing equations has received limited attention.

In contrast, experimental and numerical evidence indicates that tyre relaxation phenomena cannot be adequately described by pure transport mechanisms. In particular, increasing slip levels are often associated with progressively shorter transients and a pronounced attenuation of high-frequency slip excitations \cite{Pacejka2,Higuchi1,Higuchi2,Besselink}. These observations suggest the presence of diffusive effects within the tyre-road interaction process, arising from the distributed compliance of the tread and carcass, which are not captured by classical hyperbolic formulations without Coulomb-like friction.
Motivated by these considerations, this paper introduces the simplest string-based tyre model with distributed FrBD friction, referred to as the \emph{friction with string dynamics} (FrSD) model. The key idea is to combine a nonlocal constitutive description of tyre structural deformation with a dynamic friction law of FrBD type. Unlike traditional brush models, which assume a simple elastic foundation to describe the deflection of bristle elements \cite{Pacejka2,Guiggiani,LibroMio}, the constitutive relationships adopted in this work explicitly relate internal stresses to spatial derivatives of the deformation field, leading to governing equations that are parabolic for all nonzero slip velocities. This represents a fundamental departure from classical transport-based rolling contact models and has important physical, mathematical, and numerical implications.

Indeed, from a physical standpoint, the parabolic structure naturally introduces diffusive mechanisms that account for relaxation phenomena modulated by the magnitude of the slip and spin inputs. This enables the model to reproduce experimentally observed trends such as faster force build-up at higher slip levels and the suppression of oscillations under high-frequency excitation. From a modelling perspective, the parabolic nature of the equations allows boundary conditions to be imposed simultaneously at both ends of the contact patch, resulting in smooth deformation profiles without artificial kinks at the contact edges. From a systems-theoretic viewpoint, the resulting formulation is well suited for stability analysis and for integration into vehicle-level models, where robustness and numerical reliability are essential. 

The main contributions of this work can be summarised as follows:
\begin{enumerate}[(i)]
\item The derivation of a novel string-based tyre model with distributed FrBD friction dynamics, grounded in a nonlocal constitutive description of tyre deformation,

\item The identification and analysis of the parabolic character of the resulting governing equations, highlighting its consequences for transient tyre behaviour, including mathematical properties such as stability and passivity,

%\item The formulation of consistent expressions for tyre forces and aligning moments under combined slip and spin conditions,

\item Numerical and experimental validations of both steady-state and transient tyre responses, demonstrating the model's ability to capture relaxation effects, slip-dependent dynamics, and low-pass filtering behaviour.
\end{enumerate}
The paper is organised as follows. Section~\ref{sect:model} introduces the constitutive equations describing tyre structural compliance and the adopted dynamic friction law, derives the governing equations of the FrSD tyre model, and presents expressions for the resulting forces and moments. The mathematical properties of the new string-like formulation are then discussed in Sect.~\ref{sect:math}, where stability and passivity are studied rigorously. Section~\ref{sect:NumAnDecp} illustrates the steady-state and transient behaviour of the model through numerical simulations and experimental validation. Finally, concluding remarks and perspectives for future work are provided in Sect. ~\ref{sect:conclusions}.

\subsection*{Notation}
In this paper, $\mathbb{R}$ denotes the set of real numbers; $\mathbb{R}_{>0}$ and $\mathbb{R}_{\geq 0}$ indicate the set of positive real numbers excluding and including zero, respectively. The set of positive integer numbers is indicated with $\mathbb{N}$, whereas $\mathbb{N}_{0}$ denotes the extended set of positive integers including zero, i.e., $\mathbb{N}_{0} = \mathbb{N} \cup \{0\}$.
The set of $n\times m$ matrices with values in $\mathbb{F}$ ($\mathbb{F} = \mathbb{R}$, $\mathbb{R}_{>0}$, $\mathbb{R}_{\geq0}$) is denoted by $\mathbf{M}_{n\times m}(\mathbb{F})$ (abbreviated as $\mathbf{M}_{n}(\mathbb{F})$ whenever $m=n$). $\mathbf{Sym}_n(\mathbb{R})$ represents the group of symmetric matrices with values in $\mathbb{R}$; the identity matrix on $\mathbb{R}^n$ is indicated with $\mathbf{I}_n$. A positive-definite matrix is noted as $\mathbf{M}_n(\mathbb{R}) \ni \mathbf{Q} \succ \mathbf{0}$; a positive semidefinite one as $\mathbf{M}_n(\mathbb{R}) \ni \mathbf{Q} \succeq \mathbf{0}$. 
The standard Euclidean norm on $\mathbb{R}^n$ is indicated with $\norm{\cdot}_2$; operator norms are simply denoted by $\norm{\cdot}$.
Given a domain $\Omega$ with closure $\overline{\Omega}$, $L^p(\Omega;\mathcal{Z})$ and $C^k(\overline{\Omega};\mathcal{Z})$ ($p, k \in \{1, 2, \dots, \infty\}$) denote respectively the spaces of $L^p$-integrable functions and $k$-times continuously differentiable functions on $\overline{\Omega}$ with values in $\mathcal{Z}$ (for $T = \infty$, the interval $[0,T]$ is identified with $\mathbb{R}_{\geq 0}$). For $\alpha \in (0,1)$, $C^\alpha(\overline{\Omega};\mathcal{Z})$ denotes the Banach space of Holder-continuous functions with values in $\mathcal{Z}$. $L^2(\Omega;\mathbb{R}^n)$ denotes the Hilbert space of square-integrable functions on $\Omega$ with values in $\mathbb{R}^n$, endowed with inner product $\langle \bm{\zeta}_1, \bm{\zeta}_2 \rangle_{L^2(\Omega;\mathbb{R}^n)} = \int_\Omega \bm{\zeta}_1^{\mathrm{T}}(\xi)\bm{\zeta}_2(\xi) \dif \xi$ and induced norm $\norm{\bm{\zeta}(\cdot)}_{L^2(\Omega;\mathbb{R}^n)}$. The Hilbert space $H^1(\Omega;\mathbb{R}^n)$ consists of functions $\bm{\zeta}\in L^2(\Omega;\mathbb{R}^n)$ whose weak derivative also belongs to $L^2(\Omega;\mathbb{R}^n)$. For a function $f :\Omega \mapsto \mathbb{R}$, the sup norm is defined as $\norm{f(\cdot)}_\infty \triangleq \esup_{\Omega} \abs{f(\cdot)}$; $f : \Omega \mapsto \mathbb{R}$ belongs to the space $L^\infty(\Omega;\mathbb{R})$ if $\norm{f(\cdot)}_\infty < \infty$. A function $f \in C^0(\mathbb{R}_{\geq 0}; \mathbb{R}_{\geq 0})$ belongs to the space $\mathcal{K}$ if it is strictly increasing and $f(0) = 0$; $f \in \mathcal{K}$ belongs to the space $\mathcal{K}_\infty$ if it is unbounded. Finally, a function $f \in C^0(\mathbb{R}_{\geq 0}^2; \mathbb{R}_{\geq 0})$ belongs to the space $\mathcal{KL}$ if $f(\cdot,t) \in \mathcal{K}$ and is strictly decreasing in its second argument, with $\lim_{t\to \infty}f(\cdot,t) = 0$.

\section{Model derivation}\label{sect:model}
The present section is devoted to the derivation of the new FrSD tyre model. The main equations are first introduced in Sect.~\eqref{sect:eqs}, and then combined together in Sect.~\ref{sect:FrBD} via an application of the \emph{Implicit Function Theorem}.

\subsection{Constitutive equations, kinematic constraints, and friction model}\label{sect:eqs}
The key ingredients for deriving the new string-like description are the constitutive equations, which relate the tyre deformations to the generated forces and moments, the kinematic constraints, linking the sliding and rigid relative velocities to the treadband and carcass deflections, and the friction model. These are described in detail in Sections~\ref{sect:constitutibe} and~\ref{sect:friction}, respectively.

\subsubsection{Constitutive differential equations and kinematic constraints}\label{sect:constitutibe}
Consider a tyre making contact with the road over a region $\mathscr{C} = \{ x \in \mathbb{R} \mathrel{|} -a\leq x \leq a\}$. The deformation of the tyre treadband and carcass centrelines in the longitudinal and lateral directions may be described by string-like equations of the type \cite{Higuchi1,Higuchi2}
\begin{align}\label{eq:ODE_SC}
\mathbf{S}\dpd[2]{\bm{u}(x,t)}{x}-\mathbf{K}\bm{u}(x,t) = -\bm{q}(x,t), \quad (x,t) \in (-a,a)\times (0,T), 
\end{align}
where $\mathbb{R}^2\ni \bm{u}(x,t) = [u_x(x,t)\; u_y(x,t)]^{\mathrm{T}}$ collects the deflections of the tyre treadband and carcass, respectively, $\mathbb{R}^2 \ni \bm{q}(x,t) = [q_x(x,t)\; q_y(x,t)]^{\mathrm{T}}$ denotes the force per unit of length acting inside the contact patch, and the matrices $\mathbf{Sym}_2(\mathbb{R}) \ni \mathbf{S}\succ \mathbf{0}$ and $\mathbf{Sym}_2(\mathbb{R}) \ni \mathbf{K} \succ \mathbf{0}$ are given by
\begin{subequations}
\begin{align}
\mathbf{S} & = \begin{bmatrix} EA & 0 \\ 0 & S\end{bmatrix}, \\
\mathbf{K} & = \begin{bmatrix} k_x & 0 \\ 0 & k_y\end{bmatrix},
\end{align}
\end{subequations}
being $S \in \mathbb{R}_{>0}$ the effective tension of the string, $E \in \mathbb{R}_{>0}$ the Young's modulus, $A \in \mathbb{R}_{>0}$ the cross-sectional area, and $k_x, k_y \in \mathbb{R}_{>0}$ longitudinal and lateral stiffnesses per unit of length \cite{Higuchi1,Higuchi2}. The longitudinal and lateral equations obtained from Eq.~\eqref{eq:ODE_SC} are commonly referred to as \emph{bare} and \emph{stretched string tyre models}.

The sliding velocity of the tyre treadband and carcass particles, $\bm{v}\ped{s}(x,t) \in \mathbb{R}^2$, is given by \cite{Guiggiani,LibroMio}
\begin{align}\label{eq:slidingvel}
\begin{split}
\bm{v}\ped{s}(x,t) & = \bm{v}\ped{r}(x,t) + \dot{\bm{u}}(x,t) = \bm{v}\ped{r}(x,t) + \dod{\bm{u}(x,t)}{t} \\
& = \bm{v}\ped{r}(x,t) + \dpd{\bm{u}(x,t)}{t} - V\ped{r}(t)\dpd{\bm{u}(x,t)}{x}, \quad (x,t) \in (-a,a)\times(0,T),
\end{split}
\end{align}
where $\bm{v}\ped{r}(x,t) \in \mathbb{R}^2$ denotes the rigid relative velocity, and $V\ped{r}(t) \in [V\ped{min}, V\ped{max}]$, with $0 < V\ped{min} \leq V\ped{max}$, is the \emph{rolling speed} of the tyre.

Together, Eqs.~\eqref{eq:ODE_SC} and~\eqref{eq:slidingvel} constitute the first ingredient for the development of the new tyre model.
More specifically, Eq.~\eqref{eq:ODE_SC} postulates a constitutive relationship linking the deformation $\bm{u}(x,t)$ and the force (or stress) $\bm{q}(x,t)$, whereas~\eqref{eq:slidingvel} provides the kinematic constraints relating the tyre deflections with the rigid and sliding velocities. To further proceed with the derivation, the following matrices are conveniently introduced:
\begin{subequations}
\begin{align}
\mathbf{\Sigma}_0(x) & = \begin{bmatrix} \sigma_{0x}(x) & 0 \\0 & \sigma_{0y}(x) \end{bmatrix} \triangleq \dfrac{\mathbf{K}}{p(x)}, \\
\mathbf{\Sigma}_1(x) &= \begin{bmatrix} \sigma_{1x}(x) & 0 \\0 & \sigma_{1y}(x) \end{bmatrix} \triangleq \dfrac{\mathbf{S}}{p(x)},
\end{align}
\end{subequations}
where $p \in C^1([-a,a];\mathbb{R}_{>0})$ denotes the vertical pressure distribution inside the tyre's contact patch. Accordingly, the original ODE~\eqref{eq:ODE_SC} may be recast as 
\begin{align}\label{eq:ODE_Sigma}
\mathbf{\Sigma}_1(x)\dpd[2]{\bm{u}(x,t)}{x}-\mathbf{\Sigma}_0(x)\bm{u}(x,t) = -\bm{f}(x,t), \quad (x,t) \in (-a,a)\times (0,T),
\end{align}
where $\mathbb{R}^2 \ni \bm{f}(x,t) = [f_x(x,t)\; f_y(x,t)]^{\mathrm{T}} \triangleq \bm{q}(x,t)/p(x)$ represents the tangential force per unit of vertical load.
The \emph{boundary conditions} (BCs) for both Eqs.~\eqref{eq:ODE_SC} and~\eqref{eq:ODE_Sigma} may be formulated by approximating the deflection of the strings by exponentially decreasing functions for $x \in (-\infty,-a]$ and $x \in [a,\infty)$. 
Accordingly, the following Robin BCs are deduced \cite{Pacejka2,Higuchi1,Higuchi2}:
\begin{subequations}\label{eq:Bcs}
\begin{align}
&\mathbf{\Lambda}\dpd{\bm{u}(a,t)}{x} + \bm{u}(a,t) = \bm{0}, \\
& \mathbf{\Lambda}\dpd{\bm{u}(-a,t)}{x} - \bm{u}(-a,t) = \bm{0}, \quad t\in(0,T),
\end{align}
\end{subequations}
where $\mathbf{Sym}_2(\mathbb{R}) \cap \mathbf{GL}_2(\mathbb{R}) \ni \mathbf{\Lambda} \succ \mathbf{0}$ collects the \emph{longitudinal} and \emph{lateral relaxation lengths} $\lambda_x, \lambda_y \in \mathbb{R}_{>0}$:
\begin{align}
\mathbf{\Lambda} = \begin{bmatrix} \lambda_x & 0 \\ 0 & \lambda_y\end{bmatrix} \triangleq \begin{bmatrix} \sqrt{\dfrac{EA}{c_x}} & 0 \\ 0 & \sqrt{\dfrac{S}{c_y}}\end{bmatrix}.
\end{align}
%It is worth noting that the matrix $\mathbf{\Lambda}$ is supposed to be constant over space. This assumption is essential for Eq.~\eqref{eq:Bcs} to hold.

\subsubsection{Friction model}\label{sect:friction}
In addition to the constitutive relationship~\eqref{eq:ODE_Sigma} and kinematic constraints~\eqref{eq:slidingvel} for the tyre structural elements, it is necessary to specify a local model for the nondimensional friction force $\bm{f}\ped{r} \in \mathbb{R}^2$ as a function of the sliding velocity. Based on \cite{Sorine,Tsiotras3}, the following relationship is considered in this paper:
\begin{align}\label{eq:frModified}
\bm{f}\ped{r}(\bm{v}\ped{s}) = -\dfrac{\mathbf{M}^2(\bm{v}\ped{s})\bm{v}\ped{s}}{\norm{\mathbf{M}(\bm{v}\ped{s})\bm{v}\ped{s}}_{2,\varepsilon}},
\end{align}
where $\varepsilon \in \mathbb{R}_{\geq 0}$ represents a regularisation parameter, $\norm{\cdot}_{2,\varepsilon} \in C^0(\mathbb{R}^2;\mathbb{R}_{\geq 0})$ is a regularisation of the Euclidean norm $\norm{\cdot}_2$ for $\varepsilon\in \mathbb{R}_{>0}$, often converging uniformly to $\norm{\cdot}_2$ in $C^0(\mathbb{R}^2;\mathbb{R}_{\geq 0})$ for $\varepsilon \to 0$ (e.g., $\norm{\bm{y}}_{2,\varepsilon }= \sqrt{\norm{\bm{y}}_2^2 +\varepsilon}$), and with $\norm{\cdot}_{2,\varepsilon} \in C^1(\mathbb{R}^2;\mathbb{R}_{\geq 0})$ for $\varepsilon \in \mathbb{R}_{>0}$, and
\begin{align}\label{eq;matrixM}
\mathbf{M}(\bm{v}\ped{s}) = \begin{bmatrix} \mu_{x}(\bm{v}\ped{s}) &  0 \\ 0  & \mu_{y}(\bm{v}\ped{s})\end{bmatrix}
\end{align}
is a symmetric, positive definite matrix of friction coefficients, i.e., $\mathbf{Sym}_2(\mathbb{R}) \ni \mathbf{M}(\bm{y}) \succ \mathbf{0}$ for all $\bm{y} \in \mathbb{R}^2$. In this manuscript, it is generally assumed that $\mathbf{M} \in C^0(\mathbb{R}^2;\mathbf{Sym}_2(\mathbb{R}))$ and $\varepsilon \in \mathbb{R}_{>0}$, for reasons that will be clarified in Sect.~\ref{sect:dyn}. For many practical applications, and especially in isotropic conditions, $\mathbf{M}(\bm{v}\ped{s})$ may be specified as $\mathbf{M}(\bm{v}\ped{s}) = \mu(\bm{v}\ped{s})\mathbf{I}_2$, where $\mu : \mathbb{R}^2 \mapsto [\mu\ped{min},\infty)$, with $ \mu\ped{min} \in \mathbb{R}_{>0}$, provides an analytical expression for the friction coefficient. A very common expression for $\mu(\bm{v}\ped{s})$ is, for instance,
\begin{align}\label{eq:muExample}
\mu(\bm{v}\ped{s}) = \mu\ped{d} + (\mu\ped{s}-\mu\ped{d})\exp\Biggl(-\biggl(\dfrac{\norm{\bm{v}\ped{s}}_2}{v\ped{S}}\biggr)^{\delta\ped{S}}\Biggr)+ \mu\ped{v}(\bm{v}\ped{s}),
\end{align}
where $\mu\ped{s},\mu\ped{d} \in \mathbb{R}_{>0}$ denote the static and dynamic friction coefficients, $v\ped{S} \in \mathbb{R}_{>0}$ indicates the Stribeck velocity, $\delta\ped{S} \in \mathbb{R}_{\geq 0}$ the Stribeck exponent, and $\mu\ped{v} : \mathbb{R}^2 \mapsto \mathbb{R}_{\geq 0}$ captures the viscous friction.

Combining the constitutive equations~\eqref{eq:ODE_SC} with the kinematic constraints~\eqref{eq:slidingvel} and the friction model~\eqref{eq:frModified}, it is possible to derive the new string-like description of the tyre dynamics. This is addressed next in Sect.~\ref{eq:frModified}.

\subsection{String tyre model with FrBD dynamics (FrSD)}\label{sect:FrBD}
%This section introduces and analyses the new string-like tyre model. More specifically, its governing equations are presented in Sect.~\ref{sect.Eaus}, whereas Sect.~\ref{sect:numerical} discusses its qualitative dynamic behaviour.

%\subsection{Model equations}\label{sect.Eaus}
Starting with Eqs.~\eqref{eq:ODE_SC},~\eqref{eq:slidingvel}, and~\eqref{eq:frModified}, Theorem~\ref{thm:Theorem1} will guide the derivation of the governing equations of the FrSD tyre model.

\begin{theorem}[Edwards \cite{Edwards}]\label{thm:Theorem1}
Suppose that the mapping $\bm{H} : \mathbb{R}^{m+n}\mapsto \mathbb{R}^n$ is $C^1$ in a neighbourhood of a point $(\bm{x}^\star,\bm{y}^\star)$, where $\bm{H}(\bm{x}^\star,\bm{y}^\star) = \bm{0}$. If the Jacobian matrix $\nabla_{\bm{y}}\bm{H}(\bm{x}^\star,\bm{y}^\star)^{\mathrm{T}}$ is nonsingular, there exist a neighbourhood $\mathcal{X}$ of $\bm{x}^\star$ in $\mathbb{R}^m$, a neighbourhood $\mathcal{Y}$ of $(\bm{x}^\star,\bm{y}^\star)$ in $\mathbb{R}^{m+n}$, and a mapping $\bm{h} \in C^1(\mathcal{X};\mathbb{R}^n)$ such that $\bm{y} = \bm{h}(\bm{x})$ solves the equation $\bm{H}(\bm{y},\bm{x}) = \bm{0}$ in $\mathcal{Y}$. 
In particular, the implicitly defined mapping $\bm{h}(\cdot)$ is the limit of the sequence $\{\bm{h}_k\}_{ k\in \mathbb{N}_0}^\infty$ of the successive approximations inductively defined by
\begin{subequations}
\begin{align}
\bm{h}_{k+1}(\bm{x}) & = \bm{h}_k(\bm{x}) - \nabla_{\bm{y}}\bm{H}(\bm{x}^\star,\bm{y}^\star)^{-\mathrm{T}}\bm{H}\bigl(\bm{x},\bm{h}_k(\bm{x})\bigr), \\
 \bm{h}_0(\bm{x}) & = \bm{y}^\star,
\end{align}
\end{subequations}
for $\bm{x} \in \mathcal{X}$.
\end{theorem}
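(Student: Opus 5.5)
The plan is to prove the Implicit Function Theorem by the standard contraction-mapping argument, which produces the successive approximations in the statement directly. Write $\mathbf{A} \triangleq \nabla_{\bm{y}}\bm{H}(\bm{x}^\star,\bm{y}^\star)^{\mathrm{T}}$, which is nonsingular by hypothesis. For each fixed $\bm{x}$ define the map
\begin{align*}
\bm{T}_{\bm{x}}(\bm{y}) \triangleq \bm{y} - \mathbf{A}^{-1}\bm{H}(\bm{x},\bm{y}).
\end{align*}
Its fixed points are exactly the solutions of $\bm{H}(\bm{x},\bm{y}) = \bm{0}$, and the iteration $\bm{h}_{k+1}(\bm{x}) = \bm{T}_{\bm{x}}(\bm{h}_k(\bm{x}))$ with $\bm{h}_0(\bm{x}) = \bm{y}^\star$ is precisely the scheme in the statement. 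The whole proof therefore reduces to showing that $\bm{T}_{\bm{x}}$ is a uniform contraction of a closed ball around $\bm{y}^\star$ into itself, for $\bm{x}$ close to $\bm{x}^\star$.

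\textbf{Contraction estimate.} The Jacobian of $\bm{T}_{\bm{x}}$ in $\bm{y}$ is $\mathbf{I}_n - \mathbf{A}^{-1}\nabla_{\bm{y}}\bm{H}(\bm{x},\bm{y})^{\mathrm{T}}$, which vanishes at $(\bm{x}^\star,\bm{y}^\star)$. Since $\bm{H}$ is $C^1$, continuity gives $r,\delta > 0$ such that the operator norm of this Jacobian is at most $1/2$ whenever $\norm{\bm{x}-\bm{x}^\star}_2 \leq \delta$ and $\norm{\bm{y}-\bm{y}^\star}_2 \leq r$. The mean value inequality on the convex ball $\overline{B}_r(\bm{y}^\star)$ then yields the Lipschitz bound $\norm{\bm{T}_{\bm{x}}(\bm{y}_1)-\bm{T}_{\bm{x}}(\bm{y}_2)}_2 \leq \tfrac12\norm{\bm{y}_1-\bm{y}_2}_2$.

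\textbf{Self-mapping.} Because $\bm{H}(\bm{x}^\star,\bm{y}^\star) = \bm{0}$ and $\bm{H}$ is continuous, $\delta$ can be shrunk so that $\norm{\mathbf{A}^{-1}\bm{H}(\bm{x},\bm{y}^\star)}_2 \leq r/2$. Then
\begin{align*}
\norm{\bm{T}_{\bm{x}}(\bm{y})-\bm{y}^\star}_2 \leq \norm{\bm{T}_{\bm{x}}(\bm{y})-\bm{T}_{\bm{x}}(\bm{y}^\star)}_2 + \norm{\bm{T}_{\bm{x}}(\bm{y}^\star)-\bm{y}^\star}_2 \leq \tfrac{r}{2}+\tfrac{r}{2} = r,
\end{align*}
so $\bm{T}_{\bm{x}}$ maps $\overline{B}_r(\bm{y}^\star)$ into itself. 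Banach's fixed point theorem now gives a unique fixed point $\bm{h}(\bm{x})$ in the ball. It is the limit of $\bm{h}_k(\bm{x})$, with the geometric error bound $\norm{\bm{h}_k(\bm{x})-\bm{h}(\bm{x})}_2 \leq 2^{-k}r$, uniform in $\bm{x}$. Take $\mathcal{X}$ to be the open $\delta$-ball around $\bm{x}^\star$ and $\mathcal{Y} = \mathcal{X}\times B_r(\bm{y}^\star)$. Uniqueness of the fixed point means the zero set of $\bm{H}$ in $\mathcal{Y}$ is exactly the graph of $\bm{h}$.

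\textbf{Regularity (main obstacle).} Each $\bm{h}_k$ is continuous and the convergence is uniform, so $\bm{h}$ is continuous. For differentiability, I would use that $\nabla_{\bm{y}}\bm{H}(\bm{x},\bm{h}(\bm{x}))^{\mathrm{T}}$ remains invertible on $\mathcal{X}$; this follows from the $1/2$ bound above through a Neumann series. Expanding
\begin{align*}
\bm{0} = \bm{H}(\bm{x}+\bm{k},\bm{h}(\bm{x}+\bm{k})) - \bm{H}(\bm{x},\bm{h}(\bm{x}))
\end{align*}
to first order, I would first prove that $\bm{h}$ is Lipschitz, using the contraction once more. From this one identifies the derivative as
\begin{align*}
-\bigl(\nabla_{\bm{y}}\bm{H}^{\mathrm{T}}\bigr)^{-1}\nabla_{\bm{x}}\bm{H}^{\mathrm{T}}
\end{align*}
evaluated along the graph. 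This expression is continuous, which gives $\bm{h} \in C^1(\mathcal{X};\mathbb{R}^n)$. The step needing the most care is controlling the remainder term uniformly, which requires the Lipschitz bound on $\bm{h}$ to be in hand first.
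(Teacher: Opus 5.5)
Your proposal is correct and matches the argument behind the cited result: the paper gives no proof of its own but quotes the theorem from Edwards, where it is obtained in the same contraction-mapping/successive-approximation framework that the iteration $\bm{h}_{k+1}=\bm{T}_{\bm{x}}(\bm{h}_k)$ in the statement encodes. The only detail to tidy is that Banach's theorem places $\bm{h}(\bm{x})$ in the closed ball, so either take $\mathcal{Y}=\mathcal{X}\times\overline{B}_r(\bm{y}^\star)$ or shrink $\delta$ so that $\norm{\mathbf{A}^{-1}\bm{H}(\bm{x},\bm{y}^\star)}_2<r/2$ strictly; then $\norm{\bm{h}(\bm{x})-\bm{y}^\star}_2\leq 2\norm{\mathbf{A}^{-1}\bm{H}(\bm{x},\bm{y}^\star)}_2<r$, and the graph of $\bm{h}$ lies inside your open $\mathcal{Y}$.
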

In order to apply Theorem~\ref{thm:Theorem1}, Eq.~\eqref{eq:ODE_Sigma} is first reinterpreted algebraically, that is,
\begin{align}\label{eq:rheol1}
\bm{f}\Biggl(\bm{u},\dpd[2]{\bm{u}}{x}\Biggr) = \mathbf{\Sigma}_0\bm{u} -\mathbf{\Sigma}_1\dpd[2]{\bm{u}}{x}.
\end{align}
Subsequently, adopting the notation of Theorem~\ref{thm:Theorem1} and equating Eqs.~\eqref{eq:rheol1} and~\eqref{eq:frModified} gives
\begin{align}\label{eq:H}
\begin{split}
& \bm{H}\Biggl(\dot{\bm{u}},\bm{u},\dpd[2]{\bm{u}}{x},\bm{v}\ped{r}\Biggr) = \bm{f}\Biggl(\bm{u},\dpd[2]{\bm{u}}{x}\Biggr) - \bm{f}\ped{r}\bigl(\bm{v}\ped{s}(\dot{\bm{u}},\bm{v}\ped{r})\bigr) = \bm{0}, \quad t \in (0,T).
\end{split}
\end{align}
In the sliding regime, where $\norm{\dot{\bm{u}}}_2 \ll \norm{\bm{v}\ped{r}}_2$, Eq.~\eqref{eq:H} may be approximated by invoking Theorem~\ref{thm:Theorem1} with $\bm{x} = (\bm{u},\pd[2]{\bm{u}}{x},\bm{v}\ped{r})$ and $\bm{y} = \dot{\bm{u}} = \od{\bm{u}}{t}$, yielding
\begin{align}\label{eq:zk01}
\begin{split}
\dod{\bm{u}_{k+1}}{t} & = \dod{\bm{u}_k}{t}-\nabla_{\bm{\dot{u}}}\bm{H}\Biggl(\dot{\bm{u}}^\star, \bm{u}^\star,\dpd[2]{\bm{u}}{x}^\star,\bm{v}\ped{r}^\star\Biggr)^{-\mathrm{T}} \bm{H}\Biggl(\dot{\bm{u}}_k,\bm{u}_k,\dpd[2]{\bm{u}_k}{x},\bm{v}\ped{r}\Biggr), \quad k \in \mathbb{N}_0.
\end{split}
\end{align}
In turn, committing the additional approximation \cite{FrBD,2DFrBD,2DFrBD3}
\begin{align}\label{eq:nablaH}
\begin{split}
& \nabla_{\bm{\dot{u}}}\bm{H}\Biggl(\dot{\bm{u}},\bm{u},\dpd[2]{\bm{u}}{x},\bm{v}\ped{r}\Biggr)^{\mathrm{T}}   \approx  \dfrac{\mathbf{M}^2\bigl(\bm{v}\ped{s}(\dot{\bm{u}},\bm{v}\ped{r})\bigr)}{\norm{\mathbf{M}\bigl(\bm{v}\ped{s}(\dot{\bm{u}},\bm{v}\ped{r})\bigr)\bm{v}\ped{s}(\dot{\bm{u}},\bm{v}\ped{r})}_{2,\varepsilon}},
\end{split}
\end{align}
recalling Eq.~\eqref{eq:ODE_Sigma} and~\eqref{eq:rheol1}, and truncating Eq.~\eqref{eq:zk01} at $k=1$ provides, for an initial guess $\dot{\bm{u}}_0(x) = \bm{0}$,
\begin{align}\label{eq:ODEz}
\dot{\bm{u}}(x,t) = -\mathbf{M}^{-2}\bigl(\bm{v}\ped{r}(x,t)\bigr)\norm{\mathbf{M}\bigl(\bm{v}\ped{r}(x,t)\bigr)\bm{v}\ped{r}(x,t)}_{2,\varepsilon}\bm{f}(x,t)-\bm{v}\ped{r}(x,t), \quad (x,t) \in(-a,a)\times (0,T).
\end{align}
The above Eq.~\eqref{eq:ODEz} describes the evolution of the tyre deformations with the rigid relative velocity $\bm{v}\ped{r}(x,t)$ regarded as the input. A state-space representation of Eq.~\eqref{eq:ODEz} is introduced next in Sect.~\ref{sect:dyn}, using the travelled distance as independent time-like variable. Before moving to Sect.~\ref{sect:dyn}, an important consideration is formalised in Remark~\ref{remark:pressure} below.
\begin{remark}\label{remark:pressure}
For a spatially invariant rigid relative velocity $\bm{v}\ped{r}^\star(x) = \bm{v}\ped{r}^\star$, the adoption of a constant pressure distribution makes the arguments of Theorem~\ref{thm:Theorem1} fully rigorous, since, in this case, it is always possible to find an equilbirum of the type\footnote{It should be clarified that the expression reported in Eq.~\eqref{eq:uStar} will not coincide with the steady-state solution of the FrSD tyre model, even for spatially constant rigid relative velocities.}
\begin{align}\label{eq:uStar}
\bm{u}^\star(x) = -\mathbf{\Sigma}_0^{-1}\dfrac{\mathbf{M}^2(\bm{v}\ped{r}^\star)\bm{v\ped{r}^\star}}{\norm{\mathbf{M}(\bm{v}\ped{r}^\star)\bm{v}\ped{r}^\star}_{2,\varepsilon}},
\end{align}
satisfying
\begin{align}
\dpd{\bm{u}^\star(x)}{x} = \dpd[2]{\bm{u}^\star(x)}{x} = \dpd{\bm{u}^\star(x)}{s} = \bm{0},
\end{align}
albeit violating the BCs~\eqref{eq:Bcs}.
However, once Eq.~\eqref{eq:ODEz} has been derived, these unnecessary assumptions may be removed \emph{a posteriori}, which permits postulating spatially varying rigid relative velocities and pressure distributions $p \in C^1([-a,a];\mathbb{R}_{>0})$. It is worth noting, however, that compactly supported profiles, such as the parabolic distribution often employed by the brush models, cannot be directly used due to the singularity of $1/p(x)$ at the leading and trailing edges, and need to be regularised first. Alternatively, this limitation may be overcome by connecting the tyre strings in series with additional bristle elements, as done in \cite{Pauwelussen}. Such an extension is also expected to enhance the overall model accuracy and is planned for implementation in future work.
\end{remark}

\subsubsection{String dynamics}\label{sect:dyn}
When the spin inputs are moderate, the rigid relative velocity $\bm{v}\ped{r}(x,t)$ coincides with the slip velocity $\bm{v}(x,t) \in \mathbb{R}^2$, given by \cite{2DFrBD,2DFrBD3}
\begin{align}
\bm{v}(x,t) =- V\ped{r}(t)\bm{\sigma}(t) -V\ped{r}(t)\varphi(t)\begin{bmatrix} 0\\ x\end{bmatrix},
\end{align}
where $\mathbb{R}^2 \ni \bm{\sigma}(t) = [\sigma_x(t)\; \sigma_y(t)]^{\mathrm{T}}$ and $\varphi(t) \in \mathbb{R}$ denote the theoretical slip and spin inputs.
Therefore, replacing the time variable in Eq.~\eqref{eq:ODEz} with the travelled distance $\mathbb{R}_{\geq 0} \ni s \triangleq \int_0^t V\ped{r}(t^\prime) \dif t^\prime$, the governing PDEs of the FrSD tyre model may be inferred as
\begin{subequations}\label{eq:PDEsos}
\begin{align}
\begin{split}
& \dpd{\bm{u}(x,s)}{s} - \mathbf{\Gamma}\bigl(\bar{\bm{v}}(x,s),x,s\bigr)\dpd[2]{\bm{u}(x,s)}{x}  = \dpd{\bm{u}(x,s)}{x}+ \mathbf{\Sigma}\bigl(\bar{\bm{v}}(x,s),x,s\bigr)\bm{u}(x,s) - \bar{\bm{v}}(\bm{x},s), \\
& \qquad \qquad \qquad \qquad \qquad \qquad\qquad \qquad\qquad \qquad \qquad\qquad \qquad \qquad \quad (x,s) \in (-a,a)\times(0,S), \label{es:PDEsos1PDE}
\end{split} \\
&\mathbf{\Lambda}\dpd{\bm{u}(a,s)}{x} + \bm{u}(a,s) = \bm{0}, \label{eq:Bc1}\\
& \mathbf{\Lambda}\dpd{\bm{u}(-a,s)}{x} - \bm{u}(-a,s) = \bm{0}, \quad t\in(0,T), \label{eq:Bc2} \\
& \bm{u}(x,0) = \bm{u}_0(x), \quad x\in(-a,a),
\end{align}
\end{subequations}
where $\mathbb{R}_{>0}\ni S \triangleq \int_0^T V\ped{r}(t) \dif t$, the nondimensional rigid slip velocity is defined as $\mathbb{R}^2 \ni \bar{\bm{v}}(x,s) = [\bar{v}_{x}(x,s)\; \bar{v}_y(x,s)]^{\mathrm{T}} \triangleq \bm{v}(x,s)/V\ped{r}(s)$, and the matrix-valued functions $\mathbf{\Sigma}, \mathbf{\Gamma} : \mathbb{R}^2\times [-a,a]\times \mathbb{R} \mapsto \mathbf{M}_2(\mathbb{R})$ read
\begin{subequations}\label{eq:sigmasX}
\begin{align}
\mathbf{\Sigma}(\bar{\bm{v}},x,s) &  \triangleq \mathbf{\Psi}(\bar{\bm{v}},s)\mathbf{\Sigma}_0(x), \\
\mathbf{\Gamma}(\bar{\bm{v}},x,s) & \triangleq -\mathbf{\Psi}(\bar{\bm{v}},s)\mathbf{\Sigma}_1(x), \label{eq:Gamma}
\end{align}
\end{subequations}
with $\mathbf{\Psi} : \mathbb{R}^2 \times \times \mathbb{R}_{\geq 0} \mapsto \mathbf{M}_2(\mathbb{R})$ given by
\begin{align}
\mathbf{\Psi}(\bar{\bm{v}},s) \triangleq -\dfrac{1}{V\ped{r}(s)}\mathbf{M}^{-2}\bigl(V\ped{r}(s)\bar{\bm{v}}\bigr)\norm{\mathbf{M}\bigl(V\ped{r}(s)\bar{\bm{v}}\bigr)V\ped{r}(s)\bar{\bm{v}}}_{2,\varepsilon}.
\end{align}
Some considerations about the mathematical structure of the above PDEs~\eqref{eq:PDEsos} are in order. First, in contrast to most rolling contact models employing Coulomb or FrBD-type friction, which are formulated as systems of \emph{hyperbolic} PDEs, Eq.~\eqref{eq:PDEsos} is \emph{parabolic} for all $\varepsilon \in \mathbb{R}_{>0}$. This property follows directly from the constitutive assumption~\eqref{eq:ODE_Sigma}, which, by relating the string deformations to the shear stresses via an elliptic operator \cite{Pauwelussen}, modifies the underlying hyperbolic structure of classical rolling contact dynamics. This appears to be a new finding and demonstrates that, by injecting higher-order partial derivatives into their governing PDEs, friction can fundamentally alter the mathematical character of rolling contact processes.
It is worth noting that, in the limiting case $\bar{\bm{v}}(x,s)=\bm{0}$ and $\varepsilon=0$, the matrix $\mathbf{\Gamma}(\bar{\bm{v}}(x,s),x,s)$ defined in Eq.~\eqref{eq:Gamma} vanishes identically, and Eq.~\eqref{eq:PDEsos} reverts to a hyperbolic system. This degenerate behaviour can be effectively avoided by introducing a strictly positive regularisation parameter $\varepsilon \in \mathbb{R}_{>0}$ in Eq.~\eqref{eq:frModified}. Besides restoring parabolicity for all inputs, this regularisation enables the establishment of well-posedness results (see, e.g., Sect.~\ref{sect:wellP}) and facilitates numerical implementation by suppressing spurious instabilities.

Second, beyond improving numerical robustness, the regularisation parameter $\varepsilon$ allows both boundary conditions~\eqref{eq:Bc1} and~\eqref{eq:Bc2} to be imposed simultaneously. This yields a smooth transition between the free portions of the tyre and the contact region, which is physically sound and consistent with empirical evidence, as tread and carcass deformations are known to exhibit smooth spatial profiles. By contrast, owing to their intrinsically hyperbolic nature, classical string models with Coulomb friction must necessarily relax one boundary condition (typically enforcing only Eq.~\eqref{eq:Bc1}, which leads to the well-known \emph{no-kink} condition at the leading edge).

Third, since the matrix $\mathbf{\Gamma}(\bar{\bm{v}}(x,s),x,s)$ in Eq.~\eqref{eq:Gamma} depends explicitly on slip and spin inputs, the diffusive term in Eq.~\eqref{es:PDEsos1PDE} introduces relaxation effects that are directly modulated by the magnitude of these quantities. This feature is again in agreement with experimental observations, which indicates progressively shorter transients as slip levels increase.

\subsubsection{Tyre forces and moments}\label{sect:forces}
Starting with Eqs.~\eqref{eq:PDEsos} and~\eqref{eq:ODE_SC}, the tangential forces and the aligning moment acting on the tyre, $\mathbb{R}^2\ni \bm{F}_{\bm{x}}(s) = [F_{x}(s)\; F_y(s)]^{\mathrm{T}}$ and $M_z(s) \in \mathbb{R}$, respectively, may be determined by integration over the contact length, according to
\begin{subequations}\label{eq:FandM}
\begin{align}
\bm{F}_{\bm{x}}(s) & = \int_{-\infty}^\infty \bm{q}(x,s) \dif x = \int_{-a}^a \bm{q}(x,s) \dif x, \label{eq:Fundef}\\
\begin{split}
M_z(s) & = \int_{-\infty}^\infty \bigl[ xq_y(x,s)-u_y(x,s)q_x(x,s)\bigr] \dif x \\
& = \int_{-a}^a \bigl[ xq_y(x,s)-u_y(x,s)q_x(x,s)\bigr] \dif x, \quad s \in [0,S]. \label{eq:Mzunderfr}
\end{split}
\end{align}
\end{subequations}
It should be emphasised that a term $u_x(x,s)q_y(x,s)$ has been neglected in the computation of the aligning moment in Eq.~\eqref{eq:Mzunderfr}, which may be justified by observing that the deformation of the tyre tread is usually small compared to the longitudinal coordinate. Conversely, the contribution $u_y(x,s)q_x(x,s)$ must be included due to the relatively large deflections undergone by the tyre carcass.

Equations~\eqref{eq:FandM} may be recast in a more explicit form that is also amenable to mathematical analysis. Specifically, integrating by parts and enforcing the BCs~\eqref{eq:Bc1} and~\eqref{eq:Bc2} yields
\begin{subequations}\label{eq:FMineg}
\begin{align}\label{eq:FMinegF}
\begin{split}
\bm{F}_{\bm{x}}(s) & = \int_{-a}^{a} \mathbf{K}(x)\bm{u}(x,s) \dif x + \mathbf{S}\mathbf{\Lambda}^{-1}\bm{u}(a,s) + \mathbf{S}\mathbf{\Lambda}^{-1}\bm{u}(-a,s),
\end{split}\\
\begin{split}
M_z(s) & = \int_{-a}^a xk_yu_y(x,s) \dif x + \Biggl( a\dfrac{S}{\lambda_y} + S\Biggr)u_y(a,s) - \Biggl( a\dfrac{S}{\lambda_y} + S\Biggr)u_y(-a,s) -\int_{-a}^{a}u_y(x,s)k_xu_x(x,s)\dif x \\
& \quad -u_y(a,s)\dfrac{EA}{\lambda_x}u_x(a,s)-u_y(-a,s)\dfrac{EA}{\lambda_x}u_x(-a,s) - \int_{-a}^{a}EA\dpd{u_y(x,s)}{x}\dpd{u_y(x,s)}{x}\dif x, \quad s \in [0,S].
\end{split}
\end{align}
\end{subequations}
Equations~\eqref{eq:PDEsos} and~\eqref{eq:FMineg} completely characterise the FrSD tyre model. Its mathematical properties are studied next in Sect.~\ref{sect:math}.

\section{Mathematical properties}\label{sect:math}
The present section is devoted to the mathematical analysis of the FrSD model described by Eqs.~\eqref{eq:PDEsos} and~\eqref{eq:FMineg}. More specifically, Sect.~\ref{sect:wellP} investigates well-posedness properties, whereas stability and passivity constitute the main subject of Sect.~\ref{sect:stabAndPass}.

\subsection{Well-posedness}\label{sect:wellP}
Existence and uniqueness properties for the solution of the PDE~\eqref{eq:PDEsos} may be proved within different functional settings. For simplicity, the following analysis focuses on sufficiently regular solutions, for which well-posedness can be established by invoking standard results already available from the literature.
To this end, it is profitable to define the following functions:
\begin{subequations}\label{eq:funTilde}
\begin{align}
\tilde{\mathbf{\Sigma}}(x,s) & \triangleq \mathbf{\Sigma}\bigl(\bar{\bm{v}}(x,s),x,s\bigr), \\
\tilde{\mathbf{\Gamma}}(x,s) & \triangleq \mathbf{\Gamma}\bigl(\bar{\bm{v}}(x,s),x,s\bigr), \\
\tilde{\bm{h}}(x,s) & \triangleq \bar{\bm{v}}(x,s),
\end{align}
\end{subequations}
so that the PDE~\eqref{eq:PDEsos} may be recast as
\begin{subequations}\label{eq:PDerecast}
\begin{align}
\begin{split}
& \dpd{\bm{u}(x,s)}{s} - \tilde{\mathbf{\Gamma}}(x,s)\dpd[2]{\bm{u}(x,s)}{x}  = \dpd{\bm{u}(x,s)}{x}+ \tilde{\mathbf{\Sigma}}(x,s)\bm{u}(x,s) + \tilde{\bm{h}}(x,s),\quad (x,s) \in (-a,a)\times(0,S),
\end{split} \\
&\mathbf{\Lambda}\dpd{\bm{u}(a,s)}{x} + \bm{u}(a,s) = \bm{0}, \label{eq:Bc1tilde}\\
& \mathbf{\Lambda}\dpd{\bm{u}(-a,s)}{x} - \bm{u}(-a,s) = \bm{0}, \quad s\in(0,S), \label{eq:Bc2tilde} \\
& \bm{u}(x,0) = \bm{u}_0(x), \quad x\in(-a,a).
\end{align}
\end{subequations}
The more compact representation~\eqref{eq:PDerecast} permits recovering well-posedness invoking the results contained in \cite{Lunardi}. In particular, for every $\alpha \in (0,1)$, the following spaces are conveniently defined:
\begin{subequations}
\begin{align}
C^{\alpha,0}([-a,a]\times[0,S];\mathbb{R}^2) & \triangleq \Bigl\{ \bm{f} \in C^0([-a,a]\times[0,S];\mathbb{R}^2) \mathrel{\Big |} \bm{f}(\cdot,s) \in C^\alpha([-a,a];\mathbb{R}^2), \; s \in [0,S]\Bigr\},  \\
C^{2,1}([-a,a]\times[0,S];\mathbb{R}^2) & \triangleq \Biggl\{\bm{f} \in C^0([-a,a]\times[0,S];\mathbb{R}^2) \mathrel{\Bigg |} \dpd{\bm{f}}{t}, \dpd[2]{\bm{f}}{x} \in C^0([-a,a]\times[0,S];\mathbb{R}^2) \Biggr\}, \\
C^{2,1+\alpha}([-a,a]\times[0,S];\mathbb{R}^2) & \triangleq \Biggl\{\bm{f} \in C^{2,1}([-a,a]\times[0,S];\mathbb{R}^2) \mathrel{\Bigg |} \dpd{\bm{f}}{t}, \dpd[2]{\bm{f}}{x} \in C^{\alpha,0}([-a,a]\times[0,S];\mathbb{R}^2) \Biggr\}.
\end{align}
\end{subequations}
Theorem~\ref{thm:ex1} enounces existence and uniqueness results for regular solutions of Eq.~\eqref{eq:PDerecast}.
\begin{theorem}[Existence and uniqueness of regular solutions]\label{thm:ex1}
For all $\tilde{\mathbf{\Sigma}} \in C^1([-a,a]\times[0,S];\mathbf{M}_{2}(\mathbb{R}))$ and $\tilde{\bm{h}} \in C^1([-a,a]\times[0,S];\mathbb{R}^{2})$ as in Eq.~\eqref{eq:funTilde}, and ICs $\bm{u}_0 \in C^{2+\alpha}([-a,a];\mathbb{R}^2)$, $\alpha\in (0,1)$, satisfying the BCs~\eqref{eq:Bc1tilde} and~\eqref{eq:Bc2tilde}, the PDE~\eqref{eq:PDerecast} admits a unique \emph{regular solution} $\bm{u} \in C^{2,1+\alpha}([-a,a]\times[0,S];\mathbb{R}^2)$ that also satisfies the BCs~\eqref{eq:Bc1tilde} and~\eqref{eq:Bc2tilde}.
\begin{proof}
See Theorem 5.1.21 in \cite{Lunardi}.
\end{proof}
\end{theorem}
By composition of continuously differentiable functions, it may be realised that $\mathbf{M}\in C^1(\mathbb{R}^2; \mathbf{Sym}_2(\mathbb{R}))$ implies $\tilde{\mathbf{\Sigma}} \in C^1([-a,a]\times[0,S];\mathbf{M}_2(\mathbb{R}))$ and $\tilde{\bm{h}} \in C^1([-a,a] \times [0,S];\mathbb{R}^2)$ for all $\bar{\bm{v}} \in C^1([-a,a]\times[0,S];\mathbb{R}^2)$ and $V\ped{r} \in C^1([0,S]; [V\ped{min}, V\ped{max}])$. Essentially, the PDE~\eqref{eq:PDerecast}  with $\varepsilon \in \mathbb{R}_{>0}$ admits regular solutions for sufficiently smooth slip inputs and rolling velocities.

\subsection{Stability and passivity}\label{sect:stabAndPass}
Stability and passivity are two fundamental mathematical properties of the PDE~\eqref{eq:PDerecast}. The former ensures that both the tyre deformations and the resulting forces and moments remain bounded in time (or, equivalently, with respect to the travelled distance), whilst the latter guarantees that energy is dissipated through frictional rolling contact. Both properties can be analysed by introducing the following Lyapunov function, which represents the total elastic energy stored in the tyre:
\begin{align}\label{eq:Lyap}
\begin{split}
W\bigl(\bm{u}(\cdot,s)\bigr) & = \dfrac{1}{2}\int_{-a}^{a} \Biggl(\bm{u}^{\mathrm{T}}(x,s)\mathbf{K}\bm{u}(x,s) + \dpd{\bm{u}^{\mathrm{T}}(x,s)}{x}\mathbf{S}\dpd{\bm{u}(x,s)}{x}\Biggr) \dif x  \\
& \quad + \dfrac{1}{2}\bm{u}^{\mathrm{T}}(a,s)\mathbf{S}\mathbf{\Lambda}^{-1}\bm{u}(a,s) + \dfrac{1}{2}\bm{u}^{\mathrm{T}}(-a,s)\mathbf{S}\mathbf{\Lambda}^{-1}\bm{u}(-a,s).
\end{split}
\end{align}
For regular solutions, as those considered in Theorem~\ref{thm:ex1}, it follows from Sobolev embedding in one dimension that $\norm{\bm{u}(a,s)}_2, \norm{\bm{u}(-a,s)}_2 \leq C\norm{\bm{u}(\cdot,s)}_{H^1((-a,a);\mathbb{R}^2)}$ for some $C \in \mathbb{R}_{>0}$, and therefore the above Lyapunov function is equivalent to a squared norm on $H_1((-a,a);\mathbb{R}^2)$, and is also differentiable in the classical sense.
Accordingly, taking the derivative of Eq.~\eqref{eq:Lyap} along the dynamics~\eqref{eq:PDerecast} gives
\begin{align}\label{eq;Lyap1diff}
\begin{split}
\dod{W(s)}{s} & = \int_{-a}^{a}\Biggl(\mathbf{K}\bm{u}(x,s)-\mathbf{S}\dpd[2]{\bm{u}(x,s)}{x}\Biggr)^{\mathrm{T}}\dpd{\bm{u}(x,s)}{s} \dif x +\biggl( \dpd{\bm{u}(a,s)}{x} + \mathbf{\Lambda}^{-1}\bm{u}(a,s)\biggr)^{\mathrm{T}}\mathbf{S}\dod{\bm{u}(a,s)}{s}\\
&  \quad + \biggl( \dpd{\bm{u}(-a,s)}{x} - \mathbf{\Lambda}^{-1}\bm{u}(-a,s)\biggr)^{\mathrm{T}}\mathbf{S}\dod{\bm{u}(-a,s)}{s}, \quad s \in (0,S). 
\end{split}
\end{align}
Enforcing the BCs~\eqref{eq:Bc1tilde} and~\eqref{eq:Bc2tilde} and substituting Eq.~\eqref{es:PDEsos1PDE} into~\eqref{eq;Lyap1diff} yields
\begin{align}\label{eq:lyap2}
\begin{split}
\dod{W(s)}{s} & = -\bigl\langle \bm{q}(\cdot,s), \bar{\bm{v}}(\cdot,s)\bigr\rangle_{L^2((-a,a);\mathbb{R}^2)} \\
& \quad + \int_{-a}^{a}\Biggl(\mathbf{K}\bm{u}(x,s)-\mathbf{S}\dpd[2]{\bm{u}(x,s)}{x}\Biggr)^{\mathrm{T}}\dfrac{\mathbf{\Psi}\bigl(\bar{\bm{v}}(\bm{x},s),s) }{p(x)} \Biggl(\mathbf{K}\bm{u}(x,s)-\mathbf{S}\dpd[2]{\bm{u}(x,s)}{x}\Biggr) \dif x \\
& \quad +\int_{-a}^{a}\Biggl(\mathbf{K}\bm{u}(x,s)-\mathbf{S}\dpd[2]{\bm{u}(x,s)}{x}\Biggr)^{\mathrm{T}}\dpd{\bm{u}(x,s)}{x} \dif x,   \quad s \in (0,S). 
\end{split}
\end{align}
Integrating by parts the last term appearing in Eq.~\eqref{eq:lyap2} gives
\begin{align}
\begin{split}
& \int_{-a}^{a}\Biggl(\mathbf{K}\bm{u}(x,s)-\mathbf{S}\dpd[2]{\bm{u}(x,s)}{x}\Biggr)^{\mathrm{T}}\dpd{\bm{u}(x,s)}{x} \dif x = \dfrac{1}{2}\bm{u}^{\mathrm{T}}(a,s)\mathbf{K}\bm{u}(a,s) - \dfrac{1}{2}\dpd{\bm{u}^{\mathrm{T}}(a,s)}{x}\mathbf{S}\dpd{\bm{u}(a,s)}{x} \\
& \quad -\dfrac{1}{2}\bm{u}^{\mathrm{T}}(-a,s)\mathbf{K}\bm{u}(-a,s) +\dfrac{1}{2} \dpd{\bm{u}^{\mathrm{T}}(-a,s)}{x}\mathbf{S}\dpd{\bm{u}(-a,s)}{x},
\end{split}
\end{align}
and imposing again the BCs~\eqref{eq:Bc1tilde} and~\eqref{eq:Bc2tilde}, Eq.~\eqref{eq:lyap2} simplifies to
\begin{align}\label{eq:lyap3}
\begin{split}
\dod{W(s)}{s} & = -\bigl\langle \bm{q}(\cdot,s), \bar{\bm{v}}(\cdot,s)\bigr\rangle_{L^2((-a,a);\mathbb{R}^2)} \\
& \quad + \int_{-a}^{a}\Biggl(\mathbf{K}\bm{u}(x,s)-\mathbf{S}\dpd[2]{\bm{u}(x,s)}{x}\Biggr)^{\mathrm{T}}\dfrac{\mathbf{\Psi}\bigl(\bar{\bm{v}}(\bm{x},s),s) }{p(x)} \Biggl(\mathbf{K}\bm{u}(x,s)-\mathbf{S}\dpd[2]{\bm{u}(x,s)}{x}\Biggr) \dif x,   \quad s \in (0,S). 
\end{split}
\end{align}
The inequality~\eqref{eq:lyap3} will be instrumental in proving stability and passivity properties for the FrSD model. The next Sect.~\ref{sect:stab} focuses specifically on stability.

\subsubsection{Stability}\label{sect:stab}
In the following, stability is analysed in terms of both state and input-to-output behaviour of the PDE~\eqref{eq:PDerecast}, by adopting the notions of \emph{input-to-state} and \emph{input-to-output} stability. The former characterises the boundedness of the distributed states, and thus of the tyre deformations, whereas the latter concerns the boundedness of the generated forces and moment.
In this context, the first theoretical result is formalised in Lemma~\ref{lemma:stab}.

\begin{lemma}[Input-to-state stability (ISS)]\label{lemma:stab}
For all inputs $\bar{\bm{v}} \in C^1(\mathscr{C}\times\mathbb{R}_{\geq 0};\mathbb{R}^2) \cap L^\infty([-a,a]\times\mathbb{R}_{\geq 0};\mathbb{R}^2)$, rolling speeds $V\ped{r} \in C^1(\mathbb{R}_{\geq 0};[V\ped{min},V\ped{max}])$, and ICs $\bm{u}_0 \in C^{2+\alpha}([-a,a];\mathbb{R}^2)$, $\alpha \in (0,1)$, satisfying the BCs~\eqref{eq:Bc1tilde} and~\eqref{eq:Bc2tilde}, the PDE~\eqref{eq:PDerecast} is (uniformly) \emph{input-to-state stable} (ISS) in the spatial $H^1$-norm. In particular, there exist functions $\beta \in \mathcal{KL}$ and $\gamma \in \mathcal{K}_\infty$ such that
\begin{align}\label{eq:ISSbetaGamma}
\begin{split}
\norm{\bm{u}(\cdot,s)}_{H^1((-a,a);\mathbb{R}^2)} & \leq \beta\Bigl(\norm{\bm{u}_0(\cdot)}_{H^1((-a,a);\mathbb{R}^2)}, s\Bigr) + \gamma\Bigl( \norm{\bar{\bm{v}}(\cdot,\cdot)}_\infty\Bigr), \quad s \in \mathbb{R}_{\geq 0}.
\end{split}
\end{align}
\begin{proof}
Integrating by parts, it may be easily inferred that
\begin{align}\label{eq:normEq}
\begin{split}
& \norm{\bm{q}(\cdot,s)}_{L^2((-a,a);\mathbb{R}^2)}^2 = \int_{-a}^{a} \Biggl(\mathbf{K}\bm{u}(x,s)-\mathbf{S}\dpd[2]{\bm{u}(x,s)}{x}\Biggr)^{\mathrm{T}}\Biggl(\mathbf{K}\bm{u}(x,s)-\mathbf{S}\dpd[2]{\bm{u}(x,s)}{x}\Biggr)\dif x \\
& \quad \geq \int_{-a}^{a}\biggl(\bm{u}^{\mathrm{T}}(x,s)\mathbf{K}^2\bm{u}(x,s) +2 \dpd{\bm{u}^{\mathrm{T}}(x,s)}{x}\mathbf{K}\mathbf{S}\dpd{\bm{u}(x,s)}{x} +\dpd[2]{\bm{u}^{\mathrm{T}}(x,s)}{x}\mathbf{S}^2\dpd[2]{\bm{u}(x,s)}{x}\biggr)\dif x \\
& \quad \geq \gamma_0\norm{\bm{u}(\cdot,s)}_{H^1((-a,a);\mathbb{R}^2)}^2,
\end{split}
\end{align}
for some $\gamma_0 \in \mathbb{R}_{>0}$. Moreover, $\varepsilon \in \mathbb{R}_{>0}$ implies the existence of $\gamma_1 \in \mathbb{R}_{>0}$ such that
\begin{align}
\int_{-a}^{a}\Biggl(\mathbf{K}\bm{u}(x,s)-\mathbf{S}\dpd[2]{\bm{u}(x,s)}{x}\Biggr)^{\mathrm{T}}\dfrac{\mathbf{\Psi}\bigl(\bar{\bm{v}}(\bm{x},s),s) }{p(x)} \Biggl(\mathbf{K}\bm{u}(x,s)-\mathbf{S}\dpd[2]{\bm{u}(x,s)}{x}\Biggr) \dif x \leq \gamma_1 \norm{\bm{q}(\cdot,s)}_{L^2((-a,a);\mathbb{R}^2)}^2.
\end{align}
Consequently, applying Cauchy-Schwarz and then the generalised Young's inequality for product to the first term in Eq.~\eqref{eq:lyap3} yields 
\begin{align}
\dod{W(s)}{s} \leq - \dfrac{\gamma_1}{2}\norm{\bm{q}(\cdot,s)}_{L^2((-a,a);\mathbb{R}^2)}^2 + \dfrac{1}{2\gamma_1}\norm{\bar{\bm{v}}(\cdot,s)}_{L^2((-a,a);\mathbb{R}^2)}, \quad s \in \mathbb{R}_{>0}.
\end{align}
Recalling Eq.~\eqref{eq:normEq}, it may be concluded that there exist $\rho, \mu \in \mathbb{R}_{>0}$ such that
\begin{align}
\dod{W(s)}{s} \leq -\rho W(s) + \mu\norm{\bar{\bm{v}}(\cdot,\cdot)}_\infty^2, \quad s \in \mathbb{R}_{>0}.
\end{align}
Finally, invoking the Grönwall-Bellman inequality provides
\begin{align}\label{eq:estFinal}
W\bigl(\bm{u}(\cdot,s)\bigr) \leq \exp(-\rho s)W\bigl(\bm{u}_0(\cdot)\bigr) + \dfrac{\mu}{\rho}\mu\norm{\bar{\bm{v}}(\cdot,\cdot)}_\infty^2, \quad s \in \mathbb{R}_{\geq 0}.
\end{align}
Since the Lyapunov function~\eqref{eq:Lyap} is equivalent to a squared norm on $H^1((-a,a);\mathbb{R}^2)$, Eq.~\eqref{eq:estFinal} implies~\eqref{eq:ISSbetaGamma}.
\end{proof}
\end{lemma}
Lemma~\ref{lemma:stab} is concerned with the ISS behaviour of the PDE~\eqref{eq:PDerecast}, and asserts boundedness for the distributed state $\bm{u}(x,s)$ in the spatial $H^1$-norm. Combining this result with Eq.~\eqref{eq:FMineg}, it is possible to deduce the existence of $\beta_1, \beta_2, \beta_3 \in \mathbb{R}_{>0}$ such that
\begin{subequations}\label{eq:FyMbound}
\begin{align}
\norm{\bm{F}_{\bm{x}}(s)}_2 & \leq \beta_1\norm{\bm{u}(\cdot,s)}_{H^1((-a,a);\mathbb{R}^2)}, \\
\abs{M_z(s)} & \leq \beta_2\norm{\bm{u}(\cdot,s)}_{H^1((-a,a);\mathbb{R}^2)} + \beta_3\norm{\bm{u}(\cdot,s)}_{H^1((-a,a);\mathbb{R}^2)}^2, \quad s \in \mathbb{R}_{\geq 0}.
\end{align}
\end{subequations} 
The bounds derived in Eq.~\eqref{eq:FyMbound} immediately imply IOS for the tyre forces and aligning moment, according to Corollary~\ref{corollary:IOS} below.
\begin{corollary}[Input-to-output stability (IOS)]\label{corollary:IOS}
Under the same assumptions as Lemma~\ref{lemma:stab}, the PDE~\eqref{eq:PDerecast} with output~\eqref{eq:FMineg} is \emph{input-to-output stable} (IOS).
\end{corollary}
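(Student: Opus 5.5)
The plan is to compose the ISS estimate of Lemma~\ref{lemma:stab} with the static output bounds~\eqref{eq:FyMbound}. We need $\mathcal{KL}$ and $\mathcal{K}_\infty$ functions $\beta_F,\beta_M$ and $\gamma_F,\gamma_M$ such that
\begin{align*}
\norm{\bm{F}_{\bm{x}}(s)}_2 + \abs{M_z(s)} \leq \beta_{\mathrm{o}}\Bigl(\norm{\bm{u}_0(\cdot)}_{H^1((-a,a);\mathbb{R}^2)},s\Bigr) + \gamma_{\mathrm{o}}\Bigl(\norm{\bar{\bm{v}}(\cdot,\cdot)}_\infty\Bigr), \quad s\in\mathbb{R}_{\geq 0}.
\end{align*}
Here $\beta_{\mathrm{o}} = \beta_F+\beta_M$ and $\gamma_{\mathrm{o}}=\gamma_F+\gamma_M$. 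This is the standard IOS notion, with the output measured in the Euclidean norm on $\mathbb{R}^3$.

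The first step is to justify~\eqref{eq:FyMbound} from~\eqref{eq:FMineg}. The integral terms in $\bm{F}_{\bm{x}}$ are bounded by Cauchy--Schwarz on $(-a,a)$ in terms of $\norm{\bm{u}(\cdot,s)}_{L^2}$. The boundary terms $\bm{u}(\pm a,s)$ are handled by the one-dimensional Sobolev trace bound already recalled after~\eqref{eq:Lyap}. For $M_z$, the linear terms ($x k_y u_y$ and the boundary values of $u_y$) are treated the same way, using $\abs{x}\leq a$. The bilinear terms are bounded by $\norm{\bm{u}}_{L^2}^2$, by products of trace values (hence by $C^2\norm{\bm{u}}_{H^1}^2$), and by $\norm{\partial_x u_y}_{L^2}^2$. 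This produces exactly the affine-plus-quadratic form in~\eqref{eq:FyMbound}.

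The second step is to insert~\eqref{eq:ISSbetaGamma}. Define the class-$\mathcal{K}_\infty$ function $\alpha(r)\triangleq(\beta_1+\beta_2)r+\beta_3r^2$. Then $\norm{\bm{F}_{\bm{x}}(s)}_2+\abs{M_z(s)}\leq \alpha\bigl(\beta(\norm{\bm{u}_0},s)+\gamma(\norm{\bar{\bm{v}}}_\infty)\bigr)$. Since $\alpha$ is increasing and $\alpha(r_1+r_2)\leq \alpha(2r_1)+\alpha(2r_2)$, we obtain the desired estimate with $\beta_{\mathrm{o}}(r,s)\triangleq\alpha(2\beta(r,s))$ and $\gamma_{\mathrm{o}}(r)\triangleq\alpha(2\gamma(r))$. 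Compositions of $\mathcal{K}_\infty$ with $\mathcal{KL}$ and $\mathcal{K}_\infty$ functions stay in $\mathcal{KL}$ and $\mathcal{K}_\infty$ respectively, because $\alpha$ is continuous, strictly increasing, unbounded and vanishes at zero.

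The main obstacle is really only bookkeeping in the first step. The quadratic (cross-coupling) terms of $M_z$ must be controlled by the $H^1$ norm uniformly in $s$; the one-dimensional trace inequality settles this. Everything else is the routine comparison-function algebra above. If desired, one can also use the explicit exponential form~\eqref{eq:estFinal} to take $\beta(r,s)=c\,r\exp(-\rho s/2)$, which makes $\beta_{\mathrm{o}}$ explicit.
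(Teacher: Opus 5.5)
Your proposal is correct and takes the paper's own route. The paper likewise obtains the affine-plus-quadratic bounds~\eqref{eq:FyMbound} on $\bm{F}_{\bm{x}}$ and $M_z$ in terms of $\norm{\bm{u}(\cdot,s)}_{H^1((-a,a);\mathbb{R}^2)}$ from~\eqref{eq:FMineg}, composes them with the ISS estimate of Lemma~\ref{lemma:stab}, and declares IOS ``immediate''; you supply the trace-inequality justification and the explicit comparison functions $\alpha(2\beta)$ and $\alpha(2\gamma)$ that the paper leaves implicit. (Incidentally, the last term of $M_z$ in~\eqref{eq:FMineg} should read $\int_{-a}^{a}EA\,\partial_x u_y\,\partial_x u_x \dif x$ rather than involve $(\partial_x u_y)^2$, but your Cauchy--Schwarz bound by $EA\norm{\bm{u}(\cdot,s)}_{H^1((-a,a);\mathbb{R}^2)}^2$ covers it unchanged.)
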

Corollary~\ref{corollary:IOS} concludes the stability analysis of the FrSD model. Before moving to Sect.~\ref{sect:pass}, an additional remark is collected below.

\begin{remark}
Stability was proved by explicitly exploiting the presence of the regularisation parameter $\varepsilon \in \mathbb{R}_{>0}$ in the diffusive term. This choice was motivated by the observation that its introduction appears to be an indispensable requirement for preserving the parabolic character of Eq.~\eqref{eq:PDEsos} for all slip and spin inputs. Nevertheless, it is worth noting that alternative notions of stability can be proved without relying on its presence. For instance, in the case of a spatially invariant rigid slip velocity (i.e., in the absence of spin), boundedness can be established using exactly the same arguments as those in \cite{FrBDvisc}.
\end{remark}

\subsubsection{Dissipativity and passivity}\label{sect:pass}
Being primarely governed by friction, tyre-road interactions dissipate energy. Concerning the string-like tyre model detailed in Sect.~\ref{sect:constitutibe}, this is obviously true when sliding occurs, and the nondimensional friction force assumes the form~\eqref{eq:frModified}. However, since the FrSD model described by Eqs.~\eqref{eq:PDEsos} and~\eqref{eq:FandM} is merely an approximation to the true string-like tyre dynamics, it is interesting to investigate whether the dissipative properties of the rolling process are preserved after applying Theorem~\ref{thm:Theorem1}. In particular, recalling Eq.~\eqref{eq:slidingvel} and computing the dissipation rate gives
\begin{align}\label{eq:frictionPass}
\begin{split}
\bigl\langle \bm{q}(\cdot,s),\bm{v}\ped{s}(\cdot,s)\bigr\rangle_{L^2((-a,a);\mathbb{R}^2)} & = -\int_{-a}^{a} \bm{q}^{\mathrm{T}}(x,s)\bm{v}\ped{s}(x,s) \dif x \\
& =- \int_{-a}^{a} V\ped{r}(s)\bm{q}^{\mathrm{T}}(x,s)\dfrac{\mathbf{\Psi}\bigl(\bar{\bm{v}}\ped{r}(x,s),s\bigr)}{p(x)}\bm{q}(x,s) \dif x \geq 0, \quad s \in [0,S],
\end{split}
\end{align}
which confirms that the FrSD model still predicts energy dissipation within the tyre's contact patch.
The inequality derived according to Eq.~\eqref{eq:frictionPass} employs the actual sliding velocity of the tyre particles, and pertains to the physical nature of the rolling contact process. From a dynamical system theory perspective, however, it is also meaningful to study the passivity properties of the FrSD model when the nondimensional rigid slip velocity $\bar{\bm{v}}(x,s)$ is regarded as the input to the PDE~\eqref{eq:PDerecast} with output~\eqref{eq:FMinegF}. In this context, passivity results are formalised in Lemma~\ref{lemma:pass} below.
\begin{lemma}[Passivity]\label{lemma:pass}
For all inputs $\bar{\bm{v}} \in C^1([-a,a]\times\mathbb{R}_{\geq 0};\mathbb{R}^2) \cap L^\infty(\mathscr{C}\times\mathbb{R}_{\geq 0};\mathbb{R}^2)$, rolling speeds $V\ped{r} \in [V\ped{min},V\ped{max}]$, and ICs $\bm{u}_0 \in C^{2+\alpha}([-a,a];\mathbb{R}^2)$, $\alpha \in (0,1)$, satisfying the BCs~\eqref{eq:Bc1tilde} and~\eqref{eq:Bc2tilde}, the system described by the PDE~\eqref{eq:PDerecast} with output~\eqref{eq:FMinegF} is passive. In particular,
\begin{align}\label{eq:Fvres}
\begin{split}
&-\int_0^s \bigl\langle\bm{q}(\cdot,s^\prime),\bar{\bm{v}}(\cdot,s^\prime)\bigr\rangle_{L^2((-a,a);\mathbb{R}^2)} \dif s^\prime \geq W\bigl(\bm{u}(\cdot,s)\bigr)-W\bigl(\bm{u}_0(\cdot)\bigr), \quad s \in [0,S],
\end{split}
\end{align}
with storage function $W(\bm{u}(\cdot,s))$ as in Eq.~\eqref{eq:Lyap}.
\begin{proof}
Since the second term on the right-hand side of Eq.~\eqref{eq:lyap3} is negative definite, it holds that
\begin{align}\label{eq:passDiff}
-\bigl\langle \bm{q}(\cdot,s), \bar{\bm{v}}(\cdot,s)\bigr\rangle_{L^2((-a,a);\mathbb{R}^2)} \geq \dod{W\bigl(\bm{u}(\cdot,s)\bigr)}{s}, \quad s \in (0,S).
\end{align}
Integrating the above~\eqref{eq:passDiff} yields the result.
\end{proof}
\end{lemma}
Combining Eqs.~\eqref{eq:lyap3} and~\eqref{eq:frictionPass}, it may also be concluded that
\begin{align}
-\bigl\langle \bm{q}(\cdot,s), \bm{v}(\cdot,s)\bigr\rangle_{L^2((-a,a);\mathbb{R}^2)} = -\bigl\langle \bm{q}(\cdot,s), \bm{v}\ped{s}(\cdot,s)\bigr\rangle_{L^2((-a,a);\mathbb{R}^2)} + V\ped{r}(s)\dod{W\bigl(\bm{u}(\cdot,s)\bigr)}{s}, \quad s \in [0,S],
\end{align}
which establishes that the difference between the energy dissipation rates associated with the sliding and rigid slip velocities is precisely equal to the elastic energy stored in the tyre system. In line with the findings of \cite{TransientLosses}, this further implies that slip losses cannot, in general, be evaluated as the inner product of the global tyre forces and moment with the corresponding slip and spin inputs, with the equivalence being only recovered in steady-state conditions \cite{Toyota1,Toyota2,TyreLosses}.

\section{Numerical and experimental validation}\label{sect:NumAnDecp}
The FrSD tyre model developed in Sect.~\ref{sect:FrBD} is validated numerically and experimentally in Sects.~\ref{sect:numerical} and~\ref{sect:exp}.

\subsection{Numerical simulations}\label{sect:numerical}
In the following, simulation results are reported to illustrate both the stationary and transient behaviours of the tyre predicted using the FrSD model. Specifically, Sect.~\ref{sect:steady} is dedicated to the qualitative analysis of the steady-state tyre forces and moments, whereas the dynamic features of the new formulation are discussed in Sect.~\ref{sect:trans}.

\subsubsection{Steady-state behaviour}\label{sect:steady}
A first set of numerical results is presented to elucidate the steady-state behavior of the FrSD tyre model developed in Sect.~\ref{sect:FrBD}. Figure~\ref{fig:String} shows the lateral deflection $u_y(x)$, lateral stress $q_y(x)$, and longitudinal stress $q_x(x)$ for a translational slip input $\bm{\sigma} = (0.2,0.2)$, considering two model parametrisations (P1 and P2) with parameters listed in Table~\ref{tab:parameters} and taken from \cite{Higuchi1}.

Specifically, Fig.~\ref{fig:String}(a), corresponding to P1 in Table~\ref{tab:parameters}, highlights the portions of the string in contact with the road (shaded blue) and the free segments of the tyre (shaded green). Consistent with the model assumptions and imposed boundary conditions, the lateral deflection $u_y(x)$ decays exponentially for $x \in (-\infty,-a] \cup [a,\infty)$, whilst displaying an almost parabolic profile within the contact region $x \in [-a,a]$. Notably, no kink appears at the trailing edge, in agreement with the discussion initiated in Sect.~\ref{sect:dyn}. 
Both the longitudinal and lateral stresses vanish outside the contact zone and grow nonlinearly from the leading to the trailing edge. In particular, the longitudinal stresses reach higher magnitudes than lateral ones, reflecting the tyre's greater stiffness along the longitudinal direction.
Analogous observations apply to the trends illustrated in Fig.~\ref{fig:String}(b), produced employing smaller values for the relaxation lengths (P2 in Table~\ref{tab:parameters}). The predicted behavior is qualitatively consistent with that reported in \cite{Higuchi1}, with the main difference being that the shear stresses remain continuous along the contact length.

Figures~\ref{fig:tirePlots} and~\ref{fig:tirePlots2} illustrate the main tyre characteristics produced employing parametrisations P1 and P2, respectively, and considering different combinations of translational slip inputs and a constant pressure distribution. In agreement with classical results from Coulomb-type friction models, the lateral tyre force $F_y$ and the aligning moment $M_z$ exhibit a peak whose magnitude decreases under combined slip conditions. Both quantities approach asymptotic values at large $\sigma_y$. Although not explicitly shown in Figs.~\ref{fig:tirePlots} and~\ref{fig:tirePlots2}, the sign reversal phenomenon typically observed in the aligning moment can be reproduced using asymmetric pressure distributions, for example, a slowly decaying exponential along the contact length.
Inspection of the friction ellipses in Figs.~\ref{fig:tirePlots} and~\ref{fig:tirePlots2} indicate that, despite the applied normal load of $F_z = 3$ kN, the total tangential force does not exceed 2.5 kN at very large slips. This behaviour is a direct consequence of the adopted friction model, in which the dynamic coefficient is significantly lower than the static one (Table~\ref{tab:parameters}). Finally, particularly noteworthy is the trend of the aligning moment versus the longitudinal force. Large slip values induce significant carcass deflections, producing asymmetries in the vertical moment, as evident in both figures. In particular, negative longitudinal forces lead to a sign reversal, in agreement with the predictions of \cite{Higuchi1}.

\begin{figure}
\centering
\includegraphics[width=1\linewidth]{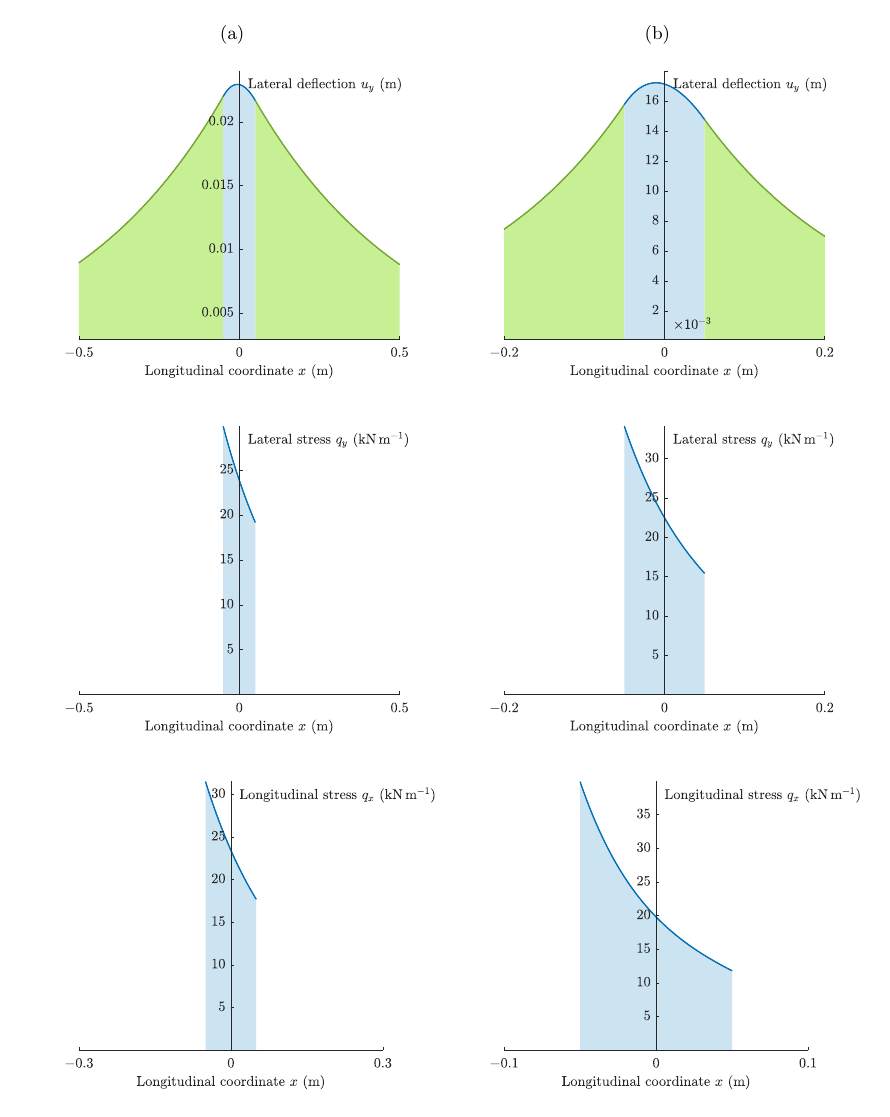} 
\caption{Steady-state lateral deflection and shear stresses. Model parameters as in Table~\ref{tab:parameters}: (a) P1; (b) P2.}
\label{fig:String}
\end{figure}

\begin{figure}
\centering
\includegraphics[width=1\linewidth]{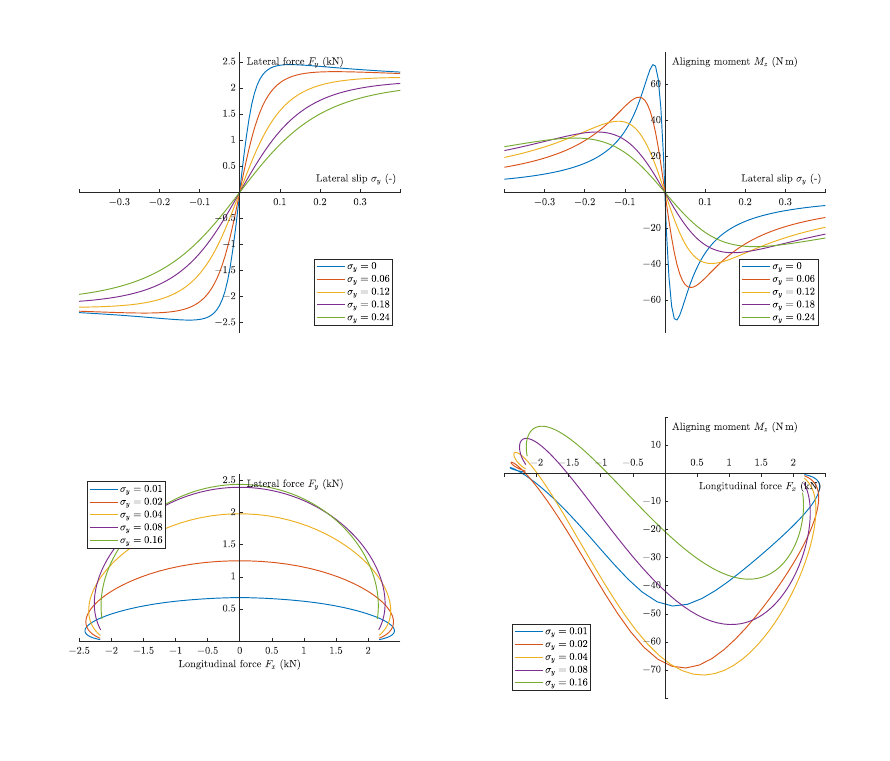} 
\caption{Steady-state tyre characteristics. Model parameters as in Table~\ref{tab:parameters} (P1).}
\label{fig:tirePlots}
\end{figure}

\begin{figure}
\centering
\includegraphics[width=1\linewidth]{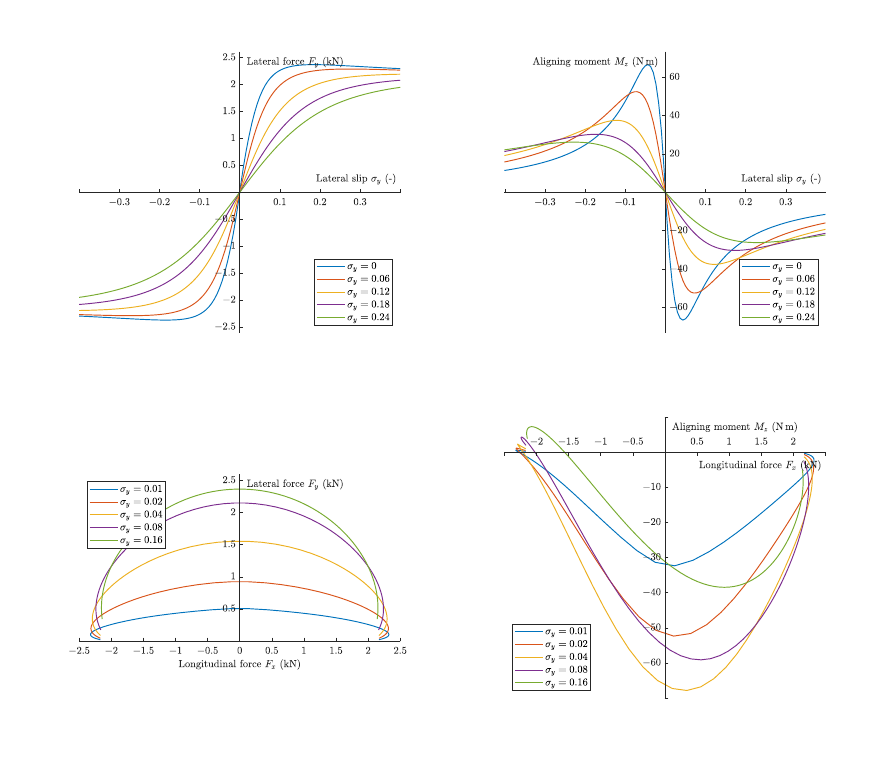} 
\caption{Steady-state tyre characteristics. Model parameters as in Table~\ref{tab:parameters} (P2).}
\label{fig:tirePlots2}
\end{figure}

\begin{table}[]
\centering
\caption{Model parameters}
{\begin{tabular}{|c|l|c|c|c|}
\hline
Parameter       & Description                    & Unit                                  & Value (P1)        & Value (P2)        \\
\hline
$k_x$           & Longitudinal stiffness         & $\textnormal{N}\,\textnormal{m}^2$    & $2\cdot 10^5$     & $6\cdot 10^5$     \\
$k_y$           & Lateral stiffness              & $\textnormal{N}\,\textnormal{m}^2$    & $1\cdot 10^5$     & $3\cdot 10^5$     \\
$\lambda_x$     & Longitudinal relaxation length & m                                     & 0.3               & 0.1               \\
$\lambda_y$     & Lateral relaxation length      & m                                     & 0.5               & 0.2               \\
$EA$            & Bending stiffness              & N                                     & $1.8\cdot 10^4$   & $6\cdot 10^3$     \\
$S$             & Effective string tension       & N                                     & $2.5\cdot 10^4$   & $1.2\cdot 10^4$   \\
$a$             & Contact patch semilength       & m                                     & 0.05               & 0.05               \\
$V\ped{r}$      & Rolling speed                  & $\textnormal{m}\,\textnormal{s}^{-1}$ & 16                & 16                \\
$\mu\ped{s}$    & Static friction coefficient    & -                                     & 1                 & 1                 \\
$\mu\ped{d}$    & Dynamic friction coefficient   & -                                     & 0.7               & 0.7               \\
$v\ped{S}$      & Stribeck velocity              & $\textnormal{m}\,\textnormal{s}^{-1}$ & 3.49              & 3.49              \\
$\delta\ped{S}$ & Stribeck exponent              & -                                     & 0.6               & 0.6               \\
$F_z$           & Vertical force                 & N                                     & 3000              & 3000              \\
$\varepsilon$   & Regularisation parameter       & -                                     & $1\cdot 10^{-12}$ & $1\cdot 10^{-12}$ \\
\hline
\end{tabular}}
\label{tab:parameters}
\end{table}

\subsubsection{Transient behaviour}\label{sect:trans}
Compared with the model variants introduced in \cite{2DFrBD,2DFrBD3}, which rely on local rheological descriptions of the carcass and tread deflections, the proposed FrSD model can capture relaxation phenomena more accurately owing to the diffusive properties of Eq.~\eqref{es:PDEsos1PDE}, which are inherited from the nonlocal constitutive equation~\eqref{eq:ODE_SC}. This feature permits accounting for nonlinear unsteady responses excited by large slip inputs, which generate short-lived transients. In this context, Fig.~\ref{fig:trans1} depicts the evolution of the lateral tyre force and aligning moment following step changes in slip of varying magnitude. Inspection of Fig.~\ref{fig:trans1} shows that increasing slip levels lead to faster relaxation dynamics. This behavior persists even for slip inputs exceeding those associated with the peak values of the tyre characteristics, as evidenced, for instance, by the responses of both $F_y$ and $M_z$ for $\sigma_y = 0.2$.

A further distinctive capability of the FrSD model is its tendency to mitigate or suppress oscillations induced by time-varying slip inputs. This effect is illustrated in Fig.~\ref{fig:trans2}, where the tyre is subjected to a sinusoidal input of the form $\sigma_y(s) = \bar{\sigma}_y[1 + 0.5\sin(\omega s)]$, with $\omega = 5$ $\textnormal{m}^{-1}$. Particularly at large values of $\bar{\sigma}_y$, the system exhibits a pronounced low-pass filtering behaviour. From a mathematical standpoint, this property originates again from the nonlinear diffusive term in Eq.~\eqref{es:PDEsos1PDE}. Near saturation, fluctuations around the steady-state lateral force are further reduced due to the high friction utilisation. For a fixed value of $\bar{\sigma}_y$, higher-frequency oscillations in the slip input are increasingly attenuated. This trend is evident in Fig.~\ref{fig:trans3}, which shows the transient responses of the lateral force and vertical moment for $\bar{\sigma}_y = 0.08$ and $\omega = 5$, $10$, and $20~\textnormal{m}^{-1}$.
\begin{figure}
\centering
\includegraphics[width=0.9\linewidth]{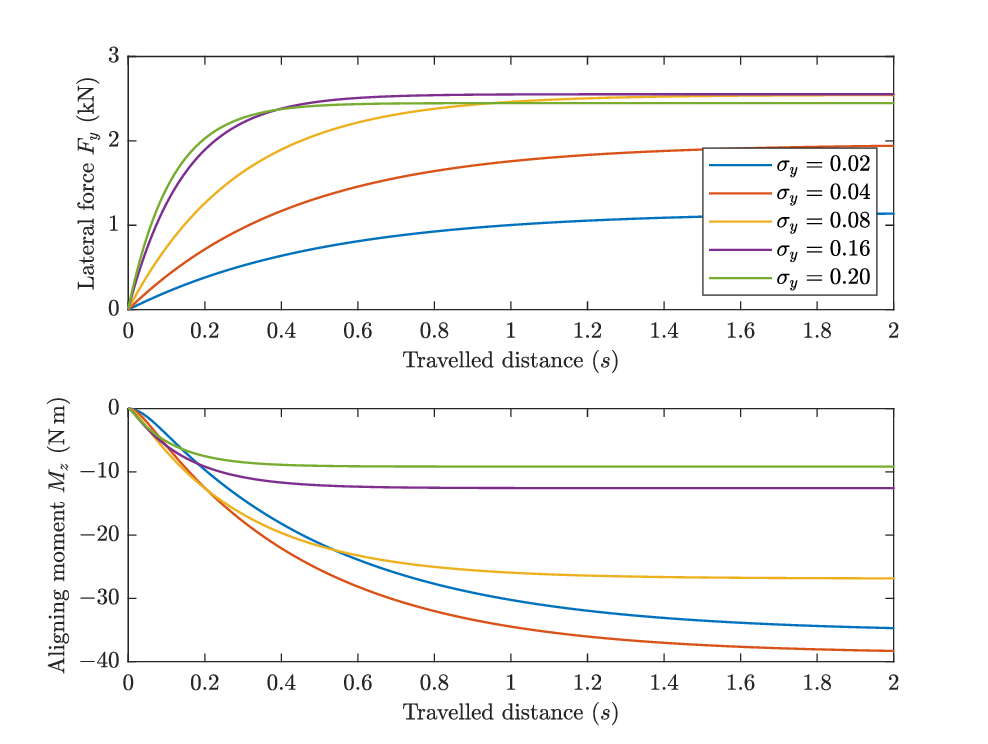} 
\caption{Transient response to different step slip inputs. Model parameters as in Table~\ref{tab:parameters}.}
\label{fig:trans1}
\end{figure} 

\begin{figure}
\centering
\includegraphics[width=0.9\linewidth]{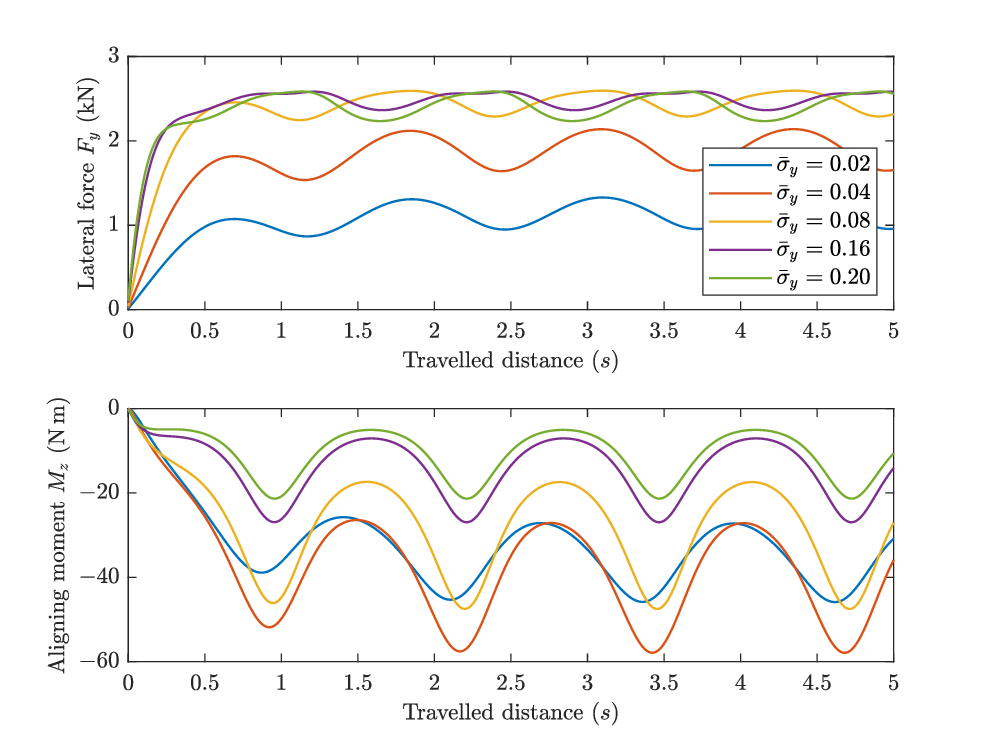} 
\caption{Transient lateral response to a sinusoidal slip input $\sigma_y(s) = \bar{\sigma}_y[1 + 0.5\sin(\omega s)]$ for different values of $\bar{\sigma}_y$ and $\omega = 5$ $\textnormal{m}^{-1}$. Model parameters as in Table~\ref{tab:parameters}.}
\label{fig:trans2}
\end{figure} 

\begin{figure}
\centering
\includegraphics[width=0.9\linewidth]{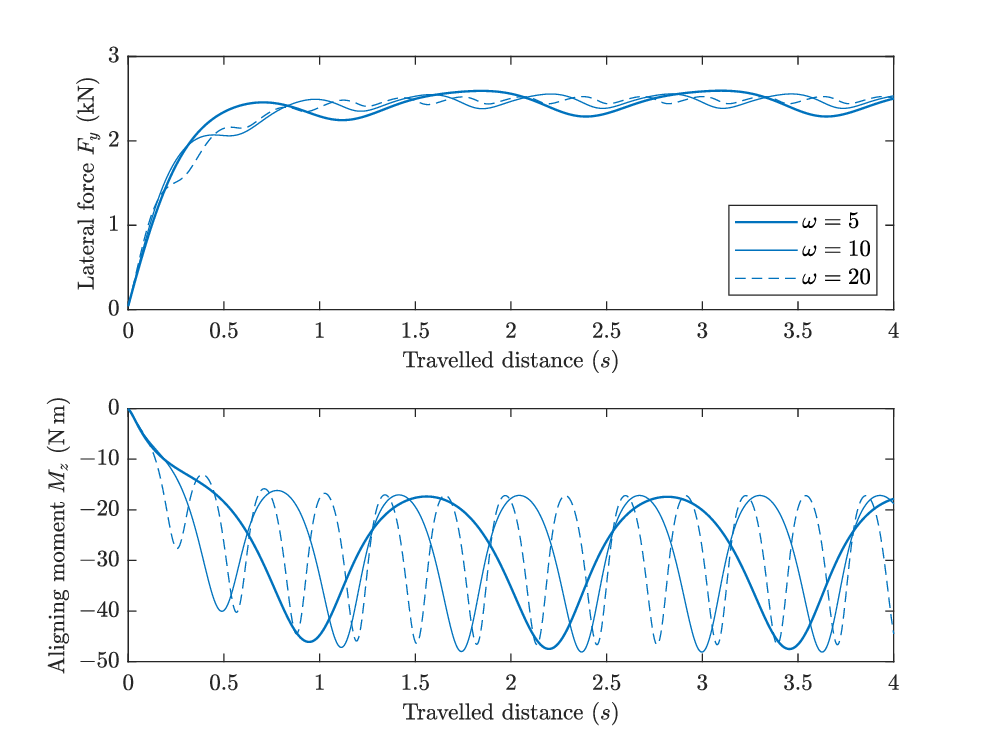} 
\caption{Transient lateral response to a sinusoidal slip input $\sigma_y(s) = \bar{\sigma}_y[1 + 0.5\sin(\omega s)]$ for $\bar{\sigma}_y = 0.08$ and different values of $\omega$. Model parameters as in Table~\ref{tab:parameters}.}
\label{fig:trans3}
\end{figure}

\subsection{Experimental validation}\label{sect:exp}
The FrSD model was validated against data collected from a test rig \cite{Besselink,thesis1,thesis2}. The next Sects. \ref{sect:carcassOpt} and \ref{sect:relaxExp} present experimental results concerning the lateral tyre deflection and transient response to step slip inputs, respectively.

\subsubsection{Steady-state tyre carcass deflection}\label{sect:carcassOpt}
The carcass and sidewall elements may undergo relatively large deformations when a tyre is subjected to lateral slip, i.e., to a sideslip angle ($\tan \alpha = \sigma_y$ in the absence of longitudinal slip). For the purpose of model validation, the tyre deflection was measured with high accuracy using optical devices, specifically the setup described in \cite{Besselink,thesis1,thesis2}. Figure~\ref{fig:ssDeflExt} compares the experimentally measured lateral deformations to the steady-state response predicted by the FrSD model (dashed lines), for several values of the sideslip angle.
In the vicinity of the contact patch ($\abs{\theta} < 10^\circ$), the FrSD model qualitatively reproduces both the magnitude and the profile of the measured deformation. For larger values of $\abs{\theta}$, the approximation introduced by assuming exponentially decaying functions for $\abs{x} > a$ results in noticeable deviations. Nevertheless, the agreement outside the contact patch is of limited relevance for the evaluation of tyre forces and moments, since, according to Eq. \eqref{eq:FMineg}, these quantities can be determined solely from the deflection profile within the contact patch and the boundary conditions at the leading and trailing edges.

The deformation trends illustrated in Fig.~\ref{fig:ssDeflExt} were obtained by calibrating the FrSD model parameters using a genetic algorithm implemented in MATLAB\textsuperscript{\textregistered}. The identified parameters, listed in Table~\ref{tab:parameters2}, were optimised to reproduce the experimental deformation profiles over the angular range $\theta \in [135^\circ, 225^\circ]$, in accordance with the observations reported above.
\begin{figure}
\centering
\includegraphics[width=0.8\linewidth]{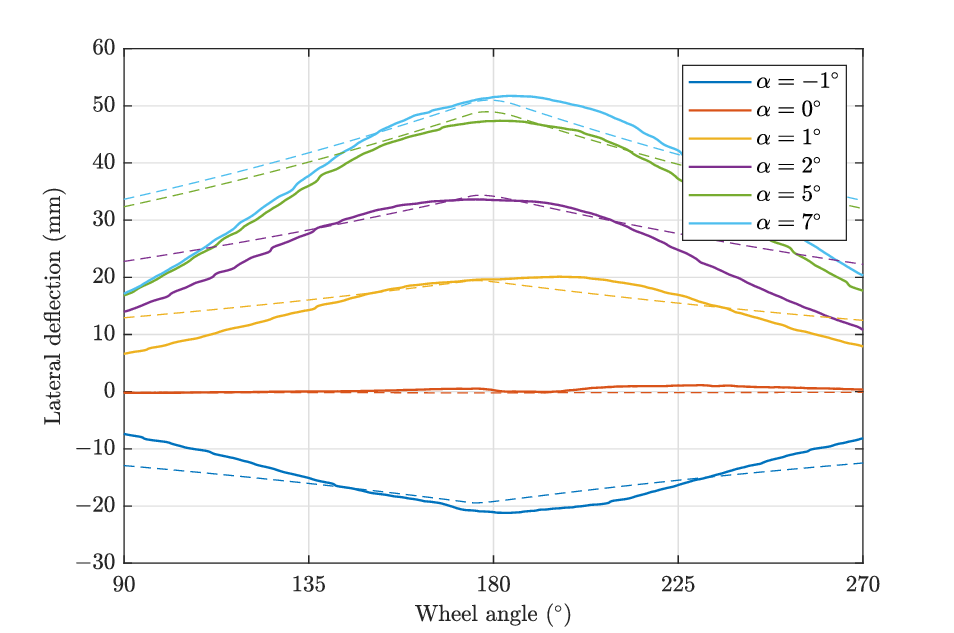} 
\caption{Steady-state lateral carcass deflection for different tyre sideslip angles. Line styles: Experimental (thick solid lines), FrSD model (dashed lines). Model parameters as in Table \ref{tab:parameters2}.}
\label{fig:ssDeflExt}
\end{figure}

\begin{table}[]
\centering
\caption{Model parameters}
{\begin{tabular}{|c|l|c|c|}
\hline
Parameter       & Description                    & Unit                                  & Value              \\
\hline
$k_y$           & Lateral stiffness              & $\textnormal{N}\,\textnormal{m}^2$    & $3.00\cdot 10^4$      \\
$\lambda_y$     & Lateral relaxation length      & m                                     & 1.089                             \\
$a$             & Contact patch semilength       & m                                     & 0.03                              \\
$\mu\ped{s}$    & Static friction coefficient    & -                                     & 1.03                                 \\
$\mu\ped{d}$    & Dynamic friction coefficient   & -                                     & 0.72                             \\
$v\ped{S}$      & Stribeck velocity              & $\textnormal{m}\,\textnormal{s}^{-1}$ & 10                         \\
$\delta\ped{S}$ & Stribeck exponent              & -                                     & 2                              \\
$F_z$           & Vertical force                 & N                                     & 3700                          \\
$\varepsilon$   & Regularisation parameter       & -                                     & $1\cdot 10^{-12}$  \\
\hline
\end{tabular}}
\label{tab:parameters2}
\end{table}

\subsubsection{Relaxation behaviour}\label{sect:relaxExp}
The relaxation behaviour of the tyre was investigated experimentally by analysing the transient response to practical longitudinal step slip inputs $\kappa_x$ ($\sigma_x = \frac{\kappa_x}{1+\kappa_x}$ in the absence of lateral slip). Figure~\ref{fig:expRel} compares the experimental data with the predictions of the FrSD model, whose parameters (listed in Table~\ref{tab:parameters3}) were optimised using a genetic algorithm implemented in MATLAB/Simulink\textsuperscript{\textregistered}.

Inspection of Fig.~\ref{fig:expRel} indicates that the FrSD model captures the nonlinear relaxation effects with satisfactory accuracy across the entire range of slip inputs, correctly predicting convergence to the true steady-state longitudinal force. This also confirms the model's ability to reproduce the stationary tyre characteristic over a wide range of slip values. Some discrepancies are nevertheless observed concerning the transient behaviour at larger slip values, where the optimised model exhibits slower transient dynamics. It is also worth noting that the response is not symmetric with respect to the slip input, with the FrSD description slightly overestimating the magnitude of the steady-state longitudinal force for $\kappa_x = -0.1$ and $\kappa_x = -0.05$.
\begin{figure}
\centering
\includegraphics[width=0.7\linewidth]{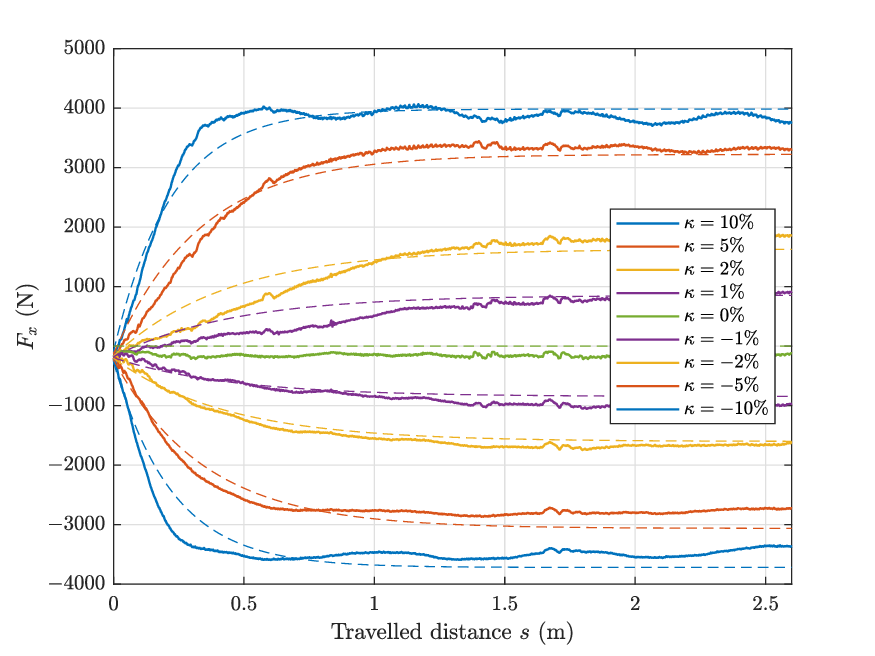} 
\caption{Longitudinal relaxation behaviour of the tyre subjected to different practical step slip inputs. Line styles: Experimental (thick solid lines), FrSD model (dashed lines). Model parameters as in Table \ref{tab:parameters3}.}
\label{fig:expRel}
\end{figure} 

\begin{table}[]
\centering
\caption{Model parameters}
{\begin{tabular}{|c|l|c|c|}
\hline
Parameter       & Description                    & Unit                                  & Value             \\
\hline
$k_x$           & Longitudinal stiffness         & $\textnormal{N}\,\textnormal{m}^2$    & $1.68\cdot 10^5$     \\
$\lambda_x$     & Longitudinal relaxation length & m                                     & 0.51                        \\
$a$             & Contact patch semilength       & m                                     & 0.05                              \\
$\mu\ped{s}$    & Static friction coefficient    & -                                     & 0.83                                 \\
$\mu\ped{d}$    & Dynamic friction coefficient   & -                                     & 0.58                             \\
$v\ped{S}$      & Stribeck velocity              & $\textnormal{m}\,\textnormal{s}^{-1}$ & 1.07                        \\
$\delta\ped{S}$ & Stribeck exponent              & -                                     & 2                              \\
$F_z$           & Vertical force                 & N                                     & 3700                          \\
$\varepsilon$   & Regularisation parameter       & -                                     & $1\cdot 10^{-12}$  \\
\hline
\end{tabular}}
\label{tab:parameters3}
\end{table}

\section{Conclusions}\label{sect:conclusions}
This paper presented a new string-like tyre model with distributed friction dynamics, renamed FrSD (\emph{friction with string dynamics}). By combining a nonlocal constitutive representation of the tyre structural deformation with a dynamic friction law of FrBD type, the proposed formulation departs from classical transport-dominated rolling contact models and leads to a system of semilinear parabolic partial differential equations.

A central outcome of this work is the recognition that the inclusion of higher-order spatial derivatives in the constitutive tyre equations fundamentally alters the mathematical character of the rolling contact process. In contrast to traditional hyperbolic formulations, the parabolic structure of the FrSD model naturally introduces diffusive effects that govern the evolution of the treadband and carcass deflections within the contact patch. These mechanisms provide a physically consistent explanation for relaxation phenomena that depend on slip and spin magnitude, as well as for the observed attenuation of high-frequency slip excitations.
The proposed formulation retains the key advantages of the string and brush tyre models -- including physical interpretability and a clear link between local deformation and global force generation -- whilst addressing several of their well-known limitations. In particular, the ability to impose boundary conditions at both ends of the contact patch yields smooth deformation profiles, avoiding artificial discontinuities at the leading or trailing edge. Moreover, the regularised FrBD friction law ensures well-posedness and numerical robustness across the full range of operating conditions.

Numerical simulations demonstrated that the FrSD model reproduces classical steady-state tyre characteristics under combined slip conditions, including force saturation and aligning moment behaviour. At the same time, it captures distinctive transient features such as slip-dependent relaxation speeds and a pronounced low-pass filtering of oscillatory inputs. These properties are particularly relevant for vehicle dynamics simulations involving aggressive manoeuvres, stability control, and high-bandwidth actuation.

From a vehicle systems perspective, the proposed formulation provides a promising basis for the development of tyre models that are both physically grounded and suitable for integration into control-oriented and multibody simulation frameworks. The parabolic structure facilitates theoretical analysis, including stability and dissipativity studies, and may prove advantageous in the context of reduced-order modelling and real-time applications.
Future work will focus on extensions to account for additional physical effects such as camber, pressure variations, and viscoelastic tread behaviour. The series connection between string-like and bristle elements should also be implemented to improve the relaxation behaviour of the tyre model. Further investigation of the implications of the parabolic structure for vehicle-level stability and control design also represents a promising direction for ongoing research.

\section*{Funding declaration}
This research was financially supported by the project FASTEST (Reg. no. 2023-06511), funded by the Swedish Research Council. 

%Luigi Romano also thanks Massimo Guiggiani for valuable discussions on the topic, as well as his inspiring work on the fundamentals of Vehicle Dynamics.
\section*{Acknowledgements}

The author thanks Igo Besselink at TU/e for providing the experimental data used in Sect.~\ref{sect:exp}.

\section*{Compliance with Ethical Standards}

The authors declare that they have no conflict of interest.

\section*{Author Contribution declaration}
L.R. is the sole author of and contributor to the manuscript.

\appendix

\end{document}